 \documentclass[11pt]{article}
\usepackage[margin=1in]{geometry}

\usepackage{amsthm}   % comment out following 2 for IEEE,CDC papers
\usepackage{amssymb}
\usepackage{mathtools}
\usepackage{enumitem}
\usepackage{amsfonts}
\usepackage{graphicx}
\usepackage{textcomp}
\usepackage{nicefrac}
\usepackage{algorithm}
\usepackage{algorithmic}
\usepackage{siunitx}
\usepackage{setspace}
\usepackage{url}
\usepackage{graphicx}
\usepackage{subcaption}
\usepackage{cite}
\usepackage{placeins}

\usepackage{tikz}
    \usetikzlibrary{positioning,calc}
    \usetikzlibrary{external}
\usepackage{pgfplots}
    \pgfplotsset{compat=1.18}
    
\allowdisplaybreaks

\newtheorem{lemma}{Lemma}
\newtheorem{theorem}{Theorem}

\newtheorem{definition}{Definition}

\usepackage[
colorlinks=true,
linkcolor=black,
citecolor=black,
urlcolor=blue
]{hyperref}

\usepackage[nameinlink]{cleveref}

\usepackage{xcolor}

\usepackage{cleveref}
\crefname{section}{Sec.}{Secs.}
\Crefname{section}{Sec.}{Secs.}

\crefname{figure}{Fig.}{Figs.}
\Crefname{figure}{Fig.}{Figs.}

\crefname{proposition}{Prop.}{Props.}
\Crefname{proposition}{Prop.}{Props.}

\crefname{theorem}{Thm.}{Thms.}
\Crefname{theorem}{Thm.}{Thms.}

\crefname{lemma}{Lem.}{Lems.}
\Crefname{lemma}{Lem.}{Lems.}

\crefname{corollary}{Cor.}{Cors.}
\Crefname{corollary}{Cor.}{Cors.}

\crefname{remark}{Remark}{Remarks}
\Crefname{remark}{Remark}{Remarks}

\crefname{assumption}{Assumption}{Assumptions}
\Crefname{assumption}{Assumption}{Assumptions}

\crefname{definition}{Def.}{Defs.}
\Crefname{definition}{Def.}{Defs.}
\crefname{example}{Example}{Examples}
\Crefname{example}{Example}{Examples}

\crefname{table}{Table}{Tables}
\Crefname{table}{Table}{Tables}

\title{\LARGE \bf
Switching Observers for Linear Systems: Beyond Individual
Observability
}

\author{Masoud S. Sakha and Rushikesh Kamalapurkar%
\thanks{Masoud S. Sakha is a Ph.D. candidate, and Rushikesh Kamalapurkar is with the Department of Mechanical and Aerospace Engineering, University of Florida, Gainesville, FL 32608 USA. (Emails: {\tt\small masoud.sakha@ufl.edu, rkamalapurkar@ufl.edu}).}%
}

\begin{document}

\maketitle
\thispagestyle{empty}
\pagestyle{empty}

%%%%%%%%%%%%%%%%%%%%%%%%%%%%%%%%%%%%%%%%%%%%%%%%%%%%%%%%%%%%%%%%%%%%%%%%%%%%%%%%
%%%%%%%%%%%%%%%%%%%%%%%%%%%%%%

\begin{abstract}
This paper develops an LMI-based switching observer for linear time-invariant systems with two output channels, where only one output is available at any given time. We assume that the system is observable when both outputs are considered together, and that the time intervals between consecutive output switches are uniformly bounded. Under these assumptions, we design a set of channel-dependent Luenberger observer gains offline via linear matrix inequalities and show that the resulting switched observer drives the estimation error to zero. The main contribution is an offline gain design and convergence result for switched observers that does not require individual output channels to be detectable.
\end{abstract}

% \begin{IEEEkeywords}
% Switching observer, Linear time-invariant (LTI) systems, Observability, Intermittent outputs, Luenberger observer, Linear matrix inequalities (LMI)
% \end{IEEEkeywords}

\section{Introduction}\label{Sec_Intro}

In many engineering systems, it is not always possible or advisable to access every available output in each control or sensing loop. Constraints in sensing, communication, or coverage may cause the set of available outputs to vary over time. For example, in aerospace applications, different sensors may operate only in specific flight regimes, while in robotic systems, line-of-sight limitations can render certain measurements unavailable in parts of the workspace. These situations naturally motivate observer architectures in which the correction term is updated using only the currently available measurement channel.

A natural first step in addressing such limitations is to determine which sensors should be used. Sensor placement is closely tied to the concept of observability, which characterizes the ability to reconstruct the system state from output measurements \cite{SCC.Willems.Mitter1971}. Observability-based metrics have therefore been widely used to guide sensor placement in dynamical systems. In this setting, the goal is to select a fixed subset of measurement channels that provides sufficient or maximal information about the state, often using criteria derived from observability Gramians or related system-theoretic measures. Various computational approaches have been developed for large-scale systems, including optimization-based and evolutionary methods \cite{SCC.Sakha.Shaker2017,SCC.SeyedSakha.Shaker2017}. For switched systems, related Gramian-based criteria, including \textit{nice observability Gramians}, have also been used for sensor placement \cite{SCC.Petreczky.Wisniewski.ea2013,SCC.SeyedSakha.Shaker.ea2019}. However, these results focus on static selection of sensors and assume that all selected measurements are simultaneously available.

When simultaneous access to all sensors is not possible, one must instead determine how to utilize the available measurements over time. Sensor scheduling considers the dynamic selection of sensors or outputs over time, in contrast to sensor placement which is typically performed offline. In many practical systems, limitations in sensing, communication bandwidth, or energy prevent simultaneous access to all measurements, necessitating online selection of a subset of sensors. Early work such as \cite{SCC.Meier.Peschon.ea1967} formulated this problem as the optimal control of measurement subsystems, while subsequent developments introduced stochastic and optimal scheduling strategies \cite{SCC.Gupta.Chung.ea2006,SCC.Shi.Cheng.ea2011}. More recent research has focused on computationally efficient scheduling policies for linear dynamical systems, including convex optimization and submodular approaches \cite{SCC.Vitus.Zhang.ea2012,SCC.Jawaid.Smith2015}, as well as covariance-based formulations that directly optimize estimation performance \cite{SCC.Maity.Hartman.ea2022}. These formulations naturally lead to time-varying measurement processes in which the output map evolves according to a scheduling signal, thereby connecting sensor scheduling with switched and time-varying estimation frameworks.

Given such time-varying measurement availability, the problem of reconstructing the system state becomes central. State estimation aims to reconstruct the system state from input–output measurements. For linear systems, the Luenberger observer \cite{SCC.Luenberger1964} and Kalman filter \cite{SCC.Kalman1960a,SCC.Kalman.Bucy1961} provide systematic frameworks for state estimation. These approaches rely on structural properties such as observability or detectability to ensure convergence of the estimation error.

In many practical systems, the set of available measurements may be determined by a designed scheduling policy or dictated by external conditions. In the former case, the selection is governed by sensing constraints, while in the latter, output availability depends on factors such as location, visibility, or communication status. In both scenarios, the observer must ensure stability of the estimation error despite temporary loss of information from some sensors. A common approach is to treat the output map as time-varying and applies observer design tools for linear time-varying systems, where stability is established under uniform observability or persistence-of-excitation conditions over sliding time windows \cite{SCC.Yin.Gu.ea2022,SCC.Xu.Xu.ea2022}. In such approaches, the observer gain is typically time-varying and must be generated online (e.g., via Riccati-based implementations), which increases implementation complexity and relies on strong uniform informativeness assumptions on the measured outputs.

Switched observers provide a natural framework for state estimation in systems with switching dynamics and time-varying measurement structures. Early results on switched observers focused on mode-dependent observer design using linear matrix inequality (LMI)-based synthesis and dwell-time conditions to guarantee stability of the estimation error \cite{SCC.Chen.Mehrdad2004,SCC.Pettersson2006}. These approaches were further extended to robust and unknown-input observer designs, where disturbances and model uncertainties are explicitly taken into account \cite{SCC.Bejarano.Pisano2010}.

Beyond stability analysis, several works have investigated observability properties of switched systems, showing that observability may be recovered through switching even when individual subsystems are not observable \cite{SCC.Tanwani.Shim.ea2012}. The possibility of recovering observability through switching has led to observer architectures that exploit information across mode transitions, including switch-based observers capable of estimating both the state and the switching signal \cite{SCC.Kuesters.Trenn.ea2017}. Extensions to nonlinear and descriptor systems include observer design for switched neural networks \cite{SCC.Lian.Feng.ea2011}, functional and reduced-order observers for switched and descriptor systems \cite{SCC.Huang.Che.ea2021}, and time-scheduled observer designs for systems with unknown inputs \cite{SCC.Shi.Fei.ea2022}. More recently, switched observer frameworks have been developed for systems that are locally unobservable or exhibit time-varying observability properties, where switching between observer gains associated with observable regions can ensure convergence under suitable dwell-time conditions \cite{SCC.Aranovskiy.Efimov.ea2021}. Similar results are available for nonlinear systems using Lipschitz continuity and linear parameter-varying (LPV)-based formulations with LMI synthesis \cite{SCC.YupanquiTello.Coutinho.ea2024}.

Despite these advances, most existing switched-observer designs rely on the assumption that each subsystem associated with a single output channel is individually detectable or observable. Under this assumption, stability of the estimation error is typically established using common Lyapunov functions, multiple Lyapunov functions, or dwell-time arguments. An alternative line of work considers time-varying or online observer designs, where the observer gain is updated in real time based on the available measurements. These approaches can relax the requirement of subsystem-wise observability by exploiting uniform or collective observability over time; however, they typically rely on strong uniform informativeness conditions and require online gain computation, which increases implementation complexity.

In this paper, we consider a linear time-invariant system with two output channels under a time-based switching mechanism in which only one measurement is available at a time. We consider switching signals satisfying uniform bounds on the residence times of each channel and establish convergence of the estimation error under such switching signals. Although the proposed framework naturally extends to systems with multiple output channels with minor modifications, we focus on the case of two outputs to simplify the exposition.

To formalize the switching mechanism, we assume that the observer sequentially accesses the two output channels, with the time spent on each channel uniformly bounded above and below. The switching pattern is described in terms of cycles indexed by $k$, parameterized by $(\tau_k,s_k)$, where $\tau_k$ denotes the cycle length and $s_k \in (0,1)$ specifies the fraction of time assigned to the first channel.

Under this structure, we develop LMI-based conditions to design a pair of Luenberger observer gains $(L_{1k},L_{2k})$ for each cycle. The gains are piecewise constant within each cycle and are computed offline as functions of the cycle parameters $(\tau_k,s_k)$ \cite{SCC.Boyd.ElGhaoui.ea1994}. Stability is established by analyzing the evolution of the estimation error across successive switching cycles.

The proposed framework extends classical switched-observer designs in two key directions: first, it does not require individual output channels to be detectable or observable, and second, it enables offline synthesis of observer gains,  and does not rely on online, time-varying gain computation. As a result, the approach applies to systems with complementary outputs and intermittent measurement access, while maintaining a computationally tractable  observer structure.

\section{Problem Statement}\label{Sec_ProbFormul}

A linear time-invariant (LTI) system with scheduled output is described by
\begin{equation}\label{e:system}
\begin{aligned}
\dot{x}(t) &= Ax(t)+Bu(t),\\
y(t) &= C_{\rho(t)}x(t), \qquad 
\rho(.):\mathbb{R}_{\ge 0}\to\{1,2\},
\end{aligned}
\end{equation}
where $x(t)\in\mathbb{R}^n$ is the state, 
$u(t)\in\mathbb{R}^m$ is the control input, and 
$y(t)\in\mathbb{R}^{p_{\rho(t)}}$ is the available measurement. 
The matrices $A\in\mathbb{R}^{n\times n}$ and 
$B\in\mathbb{R}^{n\times m}$ are known.

The plant is equipped with two output channels
\begin{equation}\label{e:outputs}
y_1(t)=C_1x(t), 
\qquad 
y_2(t)=C_2x(t),
\end{equation}
where $C_1\in\mathbb{R}^{p_1\times n}$ and 
$C_2\in\mathbb{R}^{p_2\times n}$. 
Due to sensing or communication constraints, only one output channel is available at any time, 
as encoded by $C_{\rho(t)}\in\{C_1,C_2\}$ in \eqref{e:system}.

The individual pairs $(A,C_i)$, $i\in\{1,2\}$, are not assumed to be either observable or  detectable. 
However, when both outputs are accessible simultaneously, the aggregated measurement
$y(t)=Cx(t)$, with
\begin{equation}\label{e:C_agg}
C \coloneqq 
\begin{bmatrix}
C_1\\
C_2
\end{bmatrix},
\end{equation}
renders the combined pair $(A,C)$ observable.

The output availability is determined by the switching signal $\rho(\cdot)$ in \eqref{e:system}. 
Let $\{t_\ell\}_{\ell\in\mathbb{Z}_{\ge 0}}$ denote the switching instants satisfying
\begin{equation}\label{e:switching_instants}
0=t_0<t_1<t_2<\cdots .
\end{equation}

\medskip
Throughout the paper, we impose the following assumptions:

\smallskip
\noindent
\textbf{(A1) observability:}
 $(A,C)$ is observable.

\smallskip
\noindent
\textbf{(A2) Uniform boundedness of channel residence times:}
There exists a constant $\underline{h}>0$ and $ \bar h>0$ such that the switching intervals satisfy
\begin{equation}\label{e:switching_interval_ub}
\underline{h}
\;<\;
t_{\ell+1}-t_\ell
\;<\;
\bar h,
\qquad \forall \ell\in\mathbb{Z}_{\ge 0}.
\end{equation}

Since there are only two output channels, once the initial channel is specified, the entire alternating schedule is uniquely determined. Moreover, the labeling of the channels is arbitrary; therefore, without loss of generality, we index the switching signal so that
\[
\rho(t)=
\begin{cases}
1, & t \in [t_{2k},\, t_{2k+1}),\\
2, & t \in [t_{2k+1},\, t_{2k+2}),
\end{cases}
\quad k \in \mathbb{Z}_{\ge 0}.
\]

Two consecutive switching intervals are grouped into one switching cycle by setting 
$\ell=2k$, $k\in\mathbb{Z}_{\ge 0}$. 
The  length and the duty fraction  of the $k$th   switching cycle are defined by
\begin{equation}\label{e:tau_k}
\tau_k \coloneqq t_{2k+2}-t_{2k},
\end{equation}
\begin{equation}\label{e:s_k}
s_k \coloneqq
\frac{t_{2k+1}-t_{2k}}{\tau_k}
\in(0,1).
\end{equation}
Thus, during cycle $k$, output $y_1$ is available for $s_k\tau_k$ seconds on $[t_{2k},t_{2k+1})$, 
and output $y_2$ is available for $(1-s_k)\tau_k$  seconds on $[t_{2k+1},t_{2k+2})$. 
The pair $(\tau_k,s_k)$ may vary from cycle to cycle.

From \eqref{e:switching_interval_ub}, the cycle length satisfies

\begin{equation}\label{e:cycle_length_ub}
 \underline{\tau} 
\le
\tau_k
\le
\bar{\tau},
\qquad \forall k\in\mathbb{Z}_{\ge 0},
\end{equation}

where one may take $\underline{\tau}\coloneqq 2\underline {h}$ and $\bar{\tau}\coloneqq 2\bar h$.

The state-estimation dynamics are given by a switching Luenberger observer of the form
\begin{equation}\label{e:switched_observer}
\dot{\hat{x}}(t)
=
A\hat{x}(t)+Bu(t)
+
L(t)\big(y(t)-C_{\rho(t)}\hat{x}(t)\big),
\end{equation}
where $\hat{x}(t)\in\mathbb{R}^n$ denotes the state estimate.

The observer gain is piecewise constant on each switching interval; specifically, there exists a 
sequence $\{L_\ell\}_{\ell\in\mathbb{Z}_{\ge 0}}$ such that
\begin{equation}\label{e:L_of_t_def}
L(t)\coloneqq L_\ell,
\qquad 
t\in[t_\ell,t_{\ell+1}),
\quad 
\ell\in\mathbb{Z}_{\ge 0}.
\end{equation}
Equivalently, over cycle $k$ the gain satisfies
\begin{equation}\label{e:L_sigma_cycle}
L(t)=
\begin{cases}
L_{1,k}, & t\in[t_{2k},\,t_{2k+1}),\\[4pt]
L_{2,k}, & t\in[t_{2k+1},\,t_{2k+2}),
\end{cases}
\qquad k\in\mathbb{Z}_{\ge 0}.
\end{equation}

Defining the estimation error as
\begin{equation}\label{e:est_error}
\tilde{x}(t)\coloneqq  \hat{x}(t)-x(t),
\end{equation}

the error dynamics satisfy
\begin{equation}\label{e:error_dynamics}
\dot{\tilde{x}}(t)
=
\big(A-L(t)C_{\rho(t)}\big)\tilde{x}(t).
\end{equation}
The objective is to design the gain sequence 
$\{L_{1,k},L_{2,k}\}_{k\in\mathbb{Z}_{\ge 0}}$ 
such that $\|\tilde{x}(t)\|\to 0$ as $t\to\infty$.

\section{Preliminaries}\label{Sec_Prelim}

This section recalls standard tools used in the analysis of the switched
estimation-error dynamics. 
Unless stated otherwise, $\|\cdot\|$ denotes the induced $2$-norm, and
$\mu(\cdot)$ its associated logarithmic norm. Other induced norms are
denoted by $\|\cdot\|_p$ and $\mu_p(\cdot)$.

%%%%%%%%%%%
% \begin{lemma}[Triangle inequality]\label{t:triangle}
% Let $\|\cdot\|_p$ be an induced matrix norm. Then, for any matrices $A$ and $B$ of compatible dimensions,
% \begin{equation}\label{e:triangle}
% \|A+B\|_p \le \|A\|_p + \|B\|_p.
% \end{equation}
% \end{lemma}

% \begin{lemma}[Submultiplicativity of induced norms]\label{t:submult}
% Let $\|\cdot\|_p$ be an induced matrix norm. Then, for any matrices $A$ and $B$ of compatible dimensions,
% \begin{equation}\label{e:submult}
% \|AB\|_p \le \|A\|_p\,\|B\|_p.
% \end{equation}
% \end{lemma}

% \begin{lemma}[Singular-value bounds]\label{t:sv_bounds}
% Let $H\in\mathbb R^{m\times n}$ and let $\rho_{\min}(H)$ and $\rho_{\max}(H)$ denote its smallest and largest
% singular values. Then, for all $x\in\mathbb R^{n}$,
% \begin{equation}\label{e:sv_bounds_two_sided}
% \rho_{\min}(H)\,\|x\|\ \le\ \|Hx\|\ \le\ \rho_{\max}(H)\,\|x\|.
% \end{equation}

% \end{lemma}
%%%%%%%%%%%%%%%%%%%%%%%%%%%%%%%%%%%%%%%%%%%%%%%%%%%%%%%%%%%%%%%%%%
The following standard inequalities from linear algebra will be used throughout the analysis.

\smallskip
\begin{lemma}[Standard norm inequalities]\label{t:basic_norm}
Let $\|\cdot\|_p$ be an induced matrix norm. Then, the following hold:
\begin{itemize}
\item For any matrices $A$ and $B$ of compatible dimensions:
\begin{itemize}
\item \textbf{(Triangle inequality)}
\begin{equation}\label{e:triangle}
\|A+B\|_p \le \|A\|_p + \|B\|_p.
\end{equation}
\item \textbf{(Submultiplicativity of induced norms)}
\begin{equation}\label{e:submult}
\|AB\|_p \le \|A\|_p\,\|B\|_p.
\end{equation}
\end{itemize}
\item \textbf{(Singular-value bounds)} For any matrix $H\in\mathbb R^{m\times n}$, with smallest and largest singular values
$\rho_{\min}(H)$ and $\rho_{\max}(H)$, respectively, and any vector $x\in\mathbb R^{n}$,
\begin{equation}\label{e:sv_bounds_two_sided}
\rho_{\min}(H)\,\|x\|
\le
\|Hx\|
\le
\rho_{\max}(H)\,\|x\|.
\end{equation}
\end{itemize}
\end{lemma}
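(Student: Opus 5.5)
The three claims follow directly from the definition of the induced norm,
\[
\|A\|_p=\sup_{x\neq 0}\frac{\|Ax\|_p}{\|x\|_p},
\]
together with the fact that the underlying vector $p$-norm is a norm. The plan is to verify each item in turn at the vector level and then pass to the supremum.

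\emph{Triangle inequality.} For any nonzero $x$, the vector triangle inequality and the definition of the induced norm give
\[
\|(A+B)x\|_p\le\|Ax\|_p+\|Bx\|_p\le(\|A\|_p+\|B\|_p)\|x\|_p .
\]
Dividing by $\|x\|_p$ and taking the supremum over $x\neq0$ yields \eqref{e:triangle}.

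\emph{Submultiplicativity.} Apply the defining bound $\|Mz\|_p\le\|M\|_p\|z\|_p$ twice:
\[
\|ABx\|_p\le\|A\|_p\|Bx\|_p\le\|A\|_p\|B\|_p\|x\|_p ,
\]
and then take the supremum as before to obtain \eqref{e:submult}.

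\emph{Singular-value bounds.} This item concerns the Euclidean norm. Write $\|Hx\|^2=x^\top H^\top H x$ and diagonalize $H^\top H=V\Lambda V^\top$ with $V$ orthogonal. The eigenvalues of $H^\top H$ are $\rho_i(H)^2$, padded with zeros when $m<n$. Setting $z=V^\top x$, so that $\|z\|=\|x\|$, gives $\|Hx\|^2=\sum_i\rho_i^2 z_i^2$. This quantity is bracketed between $\rho_{\min}^2\|z\|^2$ and $\rho_{\max}^2\|z\|^2$, and taking square roots gives \eqref{e:sv_bounds_two_sided}.

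The only point needing care is the convention for $\rho_{\min}(H)$ when $m<n$. In that case $H$ has a nontrivial kernel, so the lower bound is only valid if $\rho_{\min}$ is understood as the smallest eigenvalue of $H^\top H$ taken in square root (hence zero), and I would state this convention explicitly. Everything else is routine, so I expect no real obstacle.
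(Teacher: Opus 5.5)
Your proof is correct. The paper gives no proof of this lemma and simply invokes these as standard linear-algebra facts, so your argument (the supremum of the induced-norm quotient for the first two items, and the spectral decomposition of $H^\top H$ for the third) is exactly the standard justification it relies on.

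Your caveat about $m<n$ is well taken: the lower bound in \eqref{e:sv_bounds_two_sided} needs $\rho_{\min}(H)=\sqrt{\lambda_{\min}(H^\top H)}$. That is how the paper uses it in \Cref{t:xtilde_bound}, where $\rho_{\min}(H)=\min_{\|x\|=1}\|Hx\|$ and $H$ has full column rank with $r_1+r_2\ge n$ rows, so no conflict arises.
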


%%%%%%%%%%%%%%%%%%%%%%%%%%%%%%%%%%%%%%%%%%%%%%%%%%%%%%%%%%%%%

% \begin{definition}[Logarithmic norm (matrix measure) \cite{SCC.Soederlind2006}]\label{d:log_norm}
% Let $\|\cdot\|_p$ be an induced matrix norm. The corresponding logarithmic norm of $A\in\mathbb{R}^{n\times n}$
% is defined by
% \begin{equation}\label{e:mu_p_def}
% \mu_p(A)\coloneqq \lim_{h\to 0^+}\frac{\|I+hA\|_p-1}{h}.
% \end{equation}
% For $p=2$, one has
% \begin{equation}\label{e:mu2_formula}
% \mu(A)=\lambda_{\max}\!\left(\frac{A+A^\top}{2}\right),
% \end{equation}
% where $\lambda_{\max}(\cdot)$ denotes the largest eigenvalue.
% \end{definition}

% \begin{lemma}[Matrix exponential bound\cite{SCC.Soederlind2006}]\label{t:log_norm_bound}
% Let $\|\cdot\|_p$ be an induced matrix norm and let $\mu_p(A)$ be defined in \eqref{e:mu_p_def}. Then, for all $t\ge 0$,
% \begin{equation}\label{e:exp_bound_mu_p}
% \|e^{At}\|_p \le e^{t\,\mu_p(A)}.
% \end{equation}
% \end{lemma}

\smallskip
\begin{definition}[Logarithmic norm {\cite{SCC.Soederlind2006}}]\label{d:log_norm}
Let $\|\cdot\|_p$ be an induced matrix norm. The corresponding logarithmic norm of $A\in\mathbb{R}^{n\times n}$
is defined by
\begin{equation}\label{e:mu_p_def}
\mu_p(A)\coloneqq \lim_{h\to 0^+}\frac{\|I+hA\|_p-1}{h}.
\end{equation}
For $p=2$, one has
\begin{equation}\label{e:mu2_formula}
\mu(A)=\lambda_{\max}\!\left(\frac{A+A^\top}{2}\right),
\end{equation}
where $\lambda_{\max}(\cdot)$ denotes the largest eigenvalue.
5
Moreover, the matrix exponential satisfies, for all $t\ge 0$,
\begin{equation}\label{e:exp_bound_mu_p}
\|e^{At}\|_p \le e^{t\,\mu_p(A)}.
\end{equation}
\end{definition}

%%%%%%%%%%%%%%%%%%%%%%%%%%%%%%%%%%%%%%%%%%%%%%%%%%%%%%%%%%%%%%%
\smallskip
\begin{definition}[Kernel (null space)]\label{def:kernel}
For a matrix $S\in\mathbb R^{p\times q}$, the kernel (null space) of $S$ is defined as
\begin{equation*}
\ker(S) := \{x\in\mathbb R^{q} \mid Sx = 0\}.
\end{equation*}
\end{definition}

%%%%%%%%%%%%%%%%%%%%%%%%%%%%%%%%%%%%%

Throughout the remainder of this section, observability-related notions are stated for the standard LTI plant
\begin{equation}\label{e:lti_single_output_prelim}
\begin{aligned}
\dot{x}(t) &= Ax(t)+Bu(t),\\
y(t) &= Cx(t),
\end{aligned}
\end{equation}
where $x(t)\in\mathbb{R}^n$, $u(t)\in\mathbb{R}^m$, and $y(t)\in\mathbb{R}^p$, and where $A$, $B$, and $C$ are constant matrices.

\smallskip
\begin{lemma}[Variation-of-constants formula]\label{t:var_constants}
Consider the system \eqref{e:system}. For any measurable input 
$u:[t_0,t_1]\to\mathbb{R}^m$, the state satisfies, for all $t\in[t_0,t_1]$,
\begin{equation}\label{e:var_constants}
x(t)=e^{A(t-t_0)}x(t_0)+\int_{t_0}^{t} e^{A(t-\tau)}Bu(\tau)\,d\tau,
\end{equation}
(see Sec.~6.1 of \cite{SCC.Hespanha2018}).
\end{lemma}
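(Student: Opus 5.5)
The plan is to verify that the candidate expression \eqref{e:var_constants} satisfies the state equation $\dot x = Ax+Bu$ with the correct initial value, and then to invoke uniqueness of solutions of linear ODEs with measurable (locally integrable) forcing. Note that the output map $C_{\rho(t)}$ in \eqref{e:system} plays no role, since only the state equation is involved. The switching in $\rho$ therefore does not affect the argument.

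\textbf{Step 1: integrating factor.} Multiply the state equation by $e^{-A(t-t_0)}$. Using $\tfrac{d}{dt}e^{-A(t-t_0)} = -Ae^{-A(t-t_0)}$ and the fact that $A$ commutes with its exponential, one obtains
\[
\frac{d}{dt}\bigl(e^{-A(t-t_0)}x(t)\bigr)=e^{-A(t-t_0)}Bu(t)
\]
for almost every $t$. This holds in the Carath\'eodory sense: $x$ is absolutely continuous, and the derivative identity holds where $\dot x$ exists.

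\textbf{Step 2: integrate and rearrange.} Integrate from $t_0$ to $t$, which is legitimate by the fundamental theorem of calculus for absolutely continuous functions. Then multiply by $e^{A(t-t_0)}$ and use the semigroup property $e^{A(t-t_0)}e^{-A(\tau-t_0)}=e^{A(t-\tau)}$. This gives exactly \eqref{e:var_constants}.

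\textbf{Step 3: the converse check.} To show that the formula actually defines a solution, differentiate the right-hand side of \eqref{e:var_constants}. Write the integral as $e^{A(t-t_0)}\int_{t_0}^t e^{-A(\tau-t_0)}Bu(\tau)\,d\tau$ and apply the product rule together with Lebesgue's differentiation theorem. This yields $Ax(t)+Bu(t)$ almost everywhere, and the initial value at $t=t_0$ is clearly $x(t_0)$. Uniqueness then follows from the zero-input case: the difference of two solutions satisfies $\dot z = Az$ with $z(t_0)=0$, so Step 1 with $u\equiv 0$ gives $z\equiv 0$.

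\textbf{Main obstacle.} The only delicate point is regularity. Because $u$ is merely measurable, the derivative identities hold only almost everywhere, so one must assume $u$ is locally integrable for the integral to make sense. Assuming this, absolute continuity of $x$ justifies each step. The remaining algebra is routine, which is consistent with the paper simply citing Sec.~6.1 of \cite{SCC.Hespanha2018}.
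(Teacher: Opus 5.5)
Your integrating-factor argument is correct, and it is the standard textbook proof; the paper gives no proof of its own and only cites Sec.~6.1 of \cite{SCC.Hespanha2018}. Steps~1--2 already prove the claim as stated, including uniqueness, so Step~3 only adds existence, and your caveat is right that ``measurable'' in the statement should read ``integrable on $[t_0,t_1]$'' for the integral in \eqref{e:var_constants} to be defined.
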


\begin{definition}[Observability and detectability]\label{d:obs_detect}
The observability matrix of the pair $(A,C)$ is defined as
\begin{equation}\label{e:obs_matrix}
\mathcal{O}(A,C) \coloneqq
\begin{bmatrix}
C\\
CA\\
\vdots\\
CA^{n-1}
\end{bmatrix}.
\end{equation}

The pair $(A,C)$ is said to be \emph{observable} if
\begin{equation}\label{e:observability_rank}
\operatorname{rank}\!\big(\mathcal{O}(A,C)\big)=n.
\end{equation}

Let $\mathcal{N}(A,C)\coloneqq \ker\!\big(\mathcal{O}(A,C)\big)$ denote the unobservable subspace.
The pair $(A,C)$ is said to be \emph{detectable} if for all $x_0\in\mathcal{N}(A,C)$,
\begin{equation}\label{e:detectability_condition}
\lim_{t\to\infty} e^{At}x_0 = 0.
\end{equation}
\end{definition}

%%%%%%%%%%%%%%%%%%%%%%%%%%%%%%%%%%%%%%%%%%%%%%%%%%%%%%%%%%%
\smallskip
\begin{lemma}[Stabilizing observer gain]
\label{t:stab_observer_gain}
If the pair $(A,C)$ is detectable, then there exists
$L\in\mathbb{R}^{n\times p}$ such that $\Lambda \coloneqq A-LC$ is Hurwitz 
(see Sec.~16.6 of \cite{SCC.Hespanha2018}).
\end{lemma}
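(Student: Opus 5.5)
The plan is to reduce the claim to the classical pole-placement theorem by separating the observable and unobservable parts of $(A,C)$ through a change of coordinates. Let $r\coloneqq\operatorname{rank}(\mathcal{O}(A,C))$. The unobservable subspace $\mathcal{N}(A,C)$ has dimension $n-r$, and it is $A$-invariant: if $x\in\mathcal{N}(A,C)$, then $CA^{j}(Ax)=CA^{j+1}x=0$ for $j\le n-2$, and $CA^{n-1}(Ax)=CA^{n}x=0$ by the Cayley--Hamilton theorem. Choose an invertible $T$ whose last $n-r$ columns span $\mathcal{N}(A,C)$. In the new coordinates $z=T^{-1}x$ we obtain the Kalman observability decomposition
\begin{equation*}
T^{-1}AT=\begin{bmatrix} A_{o} & 0\\ A_{21} & A_{u}\end{bmatrix},\qquad CT=\begin{bmatrix} C_{o} & 0\end{bmatrix}.
\end{equation*}
The zero block in $CT$ follows because $C$ annihilates $\mathcal{N}(A,C)$. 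The zero block in the transformed $A$ follows from $A$-invariance of $\mathcal{N}(A,C)$. The pair $(A_o,C_o)$ is observable, since its observability matrix is the first $r$ columns of $\mathcal{O}(A,C)T$, and these have rank $r$.

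Next, I use detectability to show that $A_u$ is Hurwitz. For $x_0=T\begin{bmatrix}0\\ z_u\end{bmatrix}\in\mathcal{N}(A,C)$, the block-triangular structure gives
\begin{equation*}
e^{At}x_0=T\begin{bmatrix}0\\ e^{A_u t}z_u\end{bmatrix}.
\end{equation*}
By \eqref{e:detectability_condition} this tends to zero for every $z_u$, which forces every eigenvalue of $A_u$ to have negative real part.

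The remaining step is to place the eigenvalues of $A_o-L_oC_o$ in the open left half-plane. This follows from the pole-placement theorem applied to the observable pair $(A_o,C_o)$, equivalently by duality from controllability of $(A_o^\top,C_o^\top)$. Set $L\coloneqq T\begin{bmatrix}L_o\\ 0\end{bmatrix}$. Then
\begin{equation*}
T^{-1}(A-LC)T=\begin{bmatrix}A_o-L_oC_o & 0\\ A_{21} & A_u\end{bmatrix}.
\end{equation*}
This matrix is block lower triangular, so its spectrum is the union of the spectra of $A_o-L_oC_o$ and $A_u$, and both lie in the open left half-plane. Hence $\Lambda=A-LC$ is Hurwitz.

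The main obstacle is the observable-part pole assignment, which rests on the multi-output pole-placement theorem. I would cite that theorem, as the paper does via \cite{SCC.Hespanha2018}, rather than reprove it. If a self-contained argument were needed, I would instead take $L=P^{-1}C^\top/2$, where $P\succ 0$ is a solution of a Lyapunov inequality built from the observability Gramian. The rest of the argument is routine bookkeeping.
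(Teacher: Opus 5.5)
Your proposal is correct and follows the standard argument behind the cited result. The paper gives no proof of its own beyond the reference: detectability forces the unobservable block $A_u$ to be Hurwitz, eigenvalue assignment handles $(A_o,C_o)$, and your $L=T\begin{bmatrix}L_o\\ 0\end{bmatrix}$ is the same decomposition-and-embedding the paper uses in \Cref{t:decomposition} and \eqref{e:L_transformed}--\eqref{e:L_original}.

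One small precision: the first $r$ columns of $\mathcal{O}(A,C)T$ stack $C_o, C_oA_o,\dots,C_oA_o^{n-1}$, so they have $n$ rather than $r$ block rows, and you need Cayley--Hamilton for $A_o$ to conclude $\operatorname{rank}\mathcal{O}(A_o,C_o)=r$.
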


\smallskip
\begin{lemma}[Lyapunov equation]
\label{t:lyap_Lambda}
Let $\Lambda\in\mathbb{R}^{n\times n}$ be Hurwitz. Then, for any
$Q=Q^\top\succ0$, there exists a unique $P=P^\top\succ0$ such that
\begin{equation}\label{e:lyap_eq_Lambda}
\Lambda^\top P + P\Lambda = -Q,
\end{equation}
(see Sec.~8.5 of \cite{SCC.Hespanha2018}).
\end{lemma}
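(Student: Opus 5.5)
This is the standard Lyapunov equation result, and I would prove it by writing the solution explicitly, checking that it works, and then arguing uniqueness through the associated linear operator. Since $\Lambda$ is Hurwitz, there exist constants $c\ge 1$ and $\alpha>0$ with $\|e^{\Lambda t}\|\le c\,e^{-\alpha t}$ for all $t\ge0$. One way to get these is from the Jordan form: each entry of $e^{\Lambda t}$ is a polynomial in $t$ times $e^{\lambda t}$ with $\operatorname{Re}\lambda<0$. The candidate solution is
\begin{equation*}
P\coloneqq \int_0^\infty e^{\Lambda^\top t}\,Q\,e^{\Lambda t}\,dt .
\end{equation*}
The exponential bound together with submultiplicativity (Lemma~\ref{t:basic_norm}) gives
\begin{equation*}
\|e^{\Lambda^\top t}Qe^{\Lambda t}\|\le c^2\|Q\|e^{-2\alpha t},
\end{equation*}
which is integrable, so the integral converges absolutely. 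Each integrand $e^{\Lambda^\top t}Qe^{\Lambda t}$ is symmetric, so $P=P^\top$.

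Next I would verify the equation. The integrand satisfies
\begin{equation*}
\tfrac{d}{dt}\big(e^{\Lambda^\top t}Qe^{\Lambda t}\big)=\Lambda^\top e^{\Lambda^\top t}Qe^{\Lambda t}+e^{\Lambda^\top t}Qe^{\Lambda t}\Lambda .
\end{equation*}
Integrating from $0$ to $T$ and letting $T\to\infty$ gives
\begin{equation*}
\Lambda^\top P+P\Lambda=\lim_{T\to\infty}e^{\Lambda^\top T}Qe^{\Lambda T}-Q=-Q,
\end{equation*}
because the boundary term decays like $e^{-2\alpha T}$. For positive definiteness, take any $x\neq0$. Then
\begin{equation*}
x^\top Px=\int_0^\infty \big(e^{\Lambda t}x\big)^\top Q\,\big(e^{\Lambda t}x\big)\,dt .
\end{equation*}
The integrand is nonnegative and continuous in $t$, and at $t=0$ it equals $x^\top Qx>0$. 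It is therefore positive on a neighborhood of $0$, so $x^\top Px>0$ and $P\succ0$.

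For uniqueness, suppose $P_1$ and $P_2$ both solve the equation, and set $D\coloneqq P_1-P_2$, so that $\Lambda^\top D+D\Lambda=0$. Then
\begin{equation*}
\tfrac{d}{dt}\big(e^{\Lambda^\top t}De^{\Lambda t}\big)=e^{\Lambda^\top t}\big(\Lambda^\top D+D\Lambda\big)e^{\Lambda t}=0,
\end{equation*}
so $e^{\Lambda^\top t}De^{\Lambda t}=D$ for all $t\ge0$. The left-hand side tends to $0$ as $t\to\infty$, hence $D=0$.

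The only step that needs care is the decay estimate $\|e^{\Lambda t}\|\le ce^{-\alpha t}$. It controls the convergence of the integral, the vanishing of the boundary term, and the limit in the uniqueness argument. I would cite it as a standard consequence of the Jordan decomposition, choosing any $\alpha$ with $0<\alpha<\min_i|\operatorname{Re}\lambda_i(\Lambda)|$ so that the polynomial factors in $t$ are absorbed into the constant $c$.
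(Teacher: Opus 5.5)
Your proof is correct and is essentially the argument the paper relies on: the paper gives no proof of its own and only cites Sec.~8.5 of Hespanha, where the solution is likewise constructed as $P=\int_0^\infty e^{\Lambda^\top t}Qe^{\Lambda t}\,dt$, verified, and shown to be unique using the Hurwitz decay of $e^{\Lambda t}$. One cosmetic inconsistency: your opening announces a uniqueness argument via the linear operator $P\mapsto\Lambda^\top P+P\Lambda$, but you actually give the equally valid argument that $e^{\Lambda^\top t}De^{\Lambda t}$ is constant in $t$ and tends to zero.
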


% \input{Sections/Informative.tex}

%%%%%%%%%%%%%%%%%%%%%%%%%%%%%%%%%%%%%%%%%%%%%%%%%%%%%%%%%%%%%%%%%%%%%%%%%%%%%%%%%%%%%%%%%%%%%%%%%%%%%%%%%%%%%%%%%%%%%%%%

\section{Switching Observer Dynamics}\label{s:Switch_Obs}

To analyze the estimation-error dynamics under switching, we exploit a similarity transformation associated with each output channel. This transformation decomposes the state into observable and unobservable components with respect to the active output. As a result, the observable part can be directly influenced by a Luenberger correction term, while the unobservable part evolves autonomously. This structure enables the design of channel-dependent observer gains that stabilize the observable components, and, through the switching mechanism, ensures convergence of the full-state estimation error.

\smallskip
\begin{lemma}[Orthogonal observability decomposition]\label{t:decomposition}
Let \((A,C)\) be a linear pair. Then there exists an orthogonal matrix
\begin{equation}\label{e:T_orthogonal}
T=\begin{bmatrix}T_o & T_u\end{bmatrix},
\qquad
T^\top T=I,
\end{equation}
such that
\begin{equation}\label{e:A_obs_decomp}
T^\top A T
=
\begin{bmatrix}
A_o & 0\\
A_{uo} & A_u
\end{bmatrix},
\end{equation}
and
\begin{equation}\label{e:C_obs_decomp}
CT=\begin{bmatrix}C_o & 0\end{bmatrix},
\end{equation}
where \((A_o,C_o)\) is observable.
\end{lemma}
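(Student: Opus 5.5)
We build $T$ directly from the unobservable subspace $\mathcal{N}\coloneqq\mathcal{N}(A,C)=\ker\mathcal{O}(A,C)$ of Definition~\ref{d:obs_detect}. Let $r\coloneqq\operatorname{rank}\mathcal{O}(A,C)$, so that $\dim\mathcal{N}=n-r$. Choose an orthonormal basis of $\mathcal{N}^\perp$ and stack it as the columns of $T_o\in\mathbb{R}^{n\times r}$. Choose an orthonormal basis of $\mathcal{N}$ and stack it as the columns of $T_u\in\mathbb{R}^{n\times(n-r)}$. Since $\mathbb{R}^n=\mathcal{N}^\perp\oplus\mathcal{N}$ is an orthogonal direct sum, $T=[T_o\ T_u]$ is square with orthonormal columns, which gives \eqref{e:T_orthogonal}. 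In the degenerate cases $r=n$ or $r=0$, one block is empty and the statement is read accordingly.

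\textbf{Step 1: $\mathcal{N}$ is $A$-invariant and lies in $\ker C$.} Take $x\in\mathcal{N}$, so $CA^kx=0$ for $k=0,\dots,n-1$. Then $C x=0$ directly. By Cayley--Hamilton, $A^n$ is a linear combination of $I,\dots,A^{n-1}$, hence $CA^kx=0$ for all $k\ge0$. Therefore $CA^k(Ax)=0$ for $k=0,\dots,n-1$, i.e.\ $Ax\in\mathcal{N}$.

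\textbf{Step 2: the block structure \eqref{e:A_obs_decomp}--\eqref{e:C_obs_decomp}.} Write
\[
T^\top AT=\begin{bmatrix}T_o^\top AT_o & T_o^\top AT_u\\ T_u^\top AT_o & T_u^\top AT_u\end{bmatrix}.
\]
By Step~1, the columns of $AT_u$ lie in $\mathcal{N}$, which is orthogonal to the range of $T_o$; hence $T_o^\top AT_u=0$. Also $CT_u=0$, so $CT=[CT_o\ \ 0]$. Define $A_o\coloneqq T_o^\top AT_o$, $A_{uo}\coloneqq T_u^\top AT_o$, $A_u\coloneqq T_u^\top AT_u$, and $C_o\coloneqq CT_o$.

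\textbf{Step 3: $(A_o,C_o)$ is observable.} This is the only step needing care. Set $\bar A\coloneqq T^\top AT$. Because $\bar A$ is block lower triangular, $\bar A^k$ is block lower triangular with $(1,1)$ block $A_o^k$. Combined with $CT=[C_o\ 0]$, this gives
\[
CA^kT=CT\bar A^k=[\,C_oA_o^k\ \ 0\,].
\]
Stacking over $k$ yields $\mathcal{O}(A,C)T=[\,\mathcal{O}_r(A_o,C_o)\ \ 0\,]$, where $\mathcal{O}_r$ uses powers up to $n-1$. Now suppose $z\in\mathbb{R}^r$ satisfies $C_oA_o^kz=0$ for all $k\le n-1$. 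Then $\mathcal{O}(A,C)T_oz=0$, so $T_oz\in\mathcal{N}$. But $T_oz$ also lies in $\mathcal{N}^\perp$, so $T_oz=0$. Since $T_o$ has orthonormal columns, $z=0$.

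Thus the stacked matrix with powers up to $n-1$ has full column rank $r$. By Cayley--Hamilton applied to $A_o\in\mathbb{R}^{r\times r}$, truncating to powers up to $r-1$ does not change the rank. Hence $\operatorname{rank}\mathcal{O}(A_o,C_o)=r$, and $(A_o,C_o)$ is observable.
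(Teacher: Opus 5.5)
Your proof is correct and follows the same construction as the paper. You take orthonormal bases of $\mathcal{N}^\perp$ and of $\mathcal{N}=\ker\mathcal{O}(A,C)$, use the $A$-invariance of $\mathcal{N}$ to get the zero block, and use $\mathcal{N}\subseteq\ker C$ to get $CT_u=0$. You also prove the two points the paper only asserts as ``standard'' and ``by construction'': $A$-invariance, via Cayley--Hamilton, and observability of $(A_o,C_o)$, via $CA^kT=[\,C_oA_o^k\ \ 0\,]$ together with $\mathcal{N}\cap\mathcal{N}^\perp=\{0\}$.
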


\begin{proof}
See the Appendix.
\end{proof}

Applying Lemma~\ref{t:decomposition} to each output channel \(i\in\{1,2\}\), there exists an orthogonal transformation \(T_i\) such that
\begin{equation}\label{e:x_transform}
x(t) = T_i x^{(i)}(t),
\end{equation}
where the transformed state \(x^{(i)}\) can be partitioned as
\[
x^{(i)}(t) =
\begin{bmatrix}
x_o^{(i)}(t)\\
x_u^{(i)}(t)
\end{bmatrix},
\]
with the observable component \(x_o^{(i)} \in \mathbb{R}^{r_i}\), where
\[
r_i = \operatorname{rank}\!\big(\mathcal{O}(A,C_i)\big),
\]
and the unobservable component \(x_u^{(i)} \in \mathbb{R}^{n-r_i}\).

That is, there exist matrices \(A^{(i)}=T_i^\top A T_i\), \(B^{(i)}=T_i^\top B\), and \(C^{(i)}=C_i T_i\) of the form
\begin{equation}\label{e:block_structure}
A^{(i)}=
\begin{bmatrix}
A_o^{(i)} & 0\\
A_{uo}^{(i)} & A_u^{(i)}
\end{bmatrix},
\quad
B^{(i)}=
\begin{bmatrix}
B_o^{(i)}\\
B_u^{(i)}
\end{bmatrix},
\quad
C^{(i)}=\begin{bmatrix} C_o^{(i)} & 0 \end{bmatrix},
\end{equation}
such that
\begin{equation}\label{e:xi_dyn}
\begin{aligned}
\dot{x}^{(i)}(t) &= A^{(i)}x^{(i)}(t)+B^{(i)}u(t), \\
y_i(t) &= C^{(i)}x^{(i)}(t),
\end{aligned}
\end{equation}
and the pair \(\big(A_o^{(i)},C_o^{(i)}\big)\) is observable.

%%%%%%%%%%%%%%%%%%%%%%%%%%%%%%%%%%%%%%%%%%%%%%%%%%%%%%%%%%%%
\subsection{Channel-Dependent Observer}

For each channel $i\in\{1,2\}$, we design a Luenberger observer that corrects
only the observable component. To avoid excessive indexing, the possible cycle index $k$
is suppressed in this subsection and reintroduced in
Section~\ref{Sec_SwitchObs:SwitchedImplementation}.

Define the estimate in the transformed coordinate as
\begin{equation}\label{e:xhat_decomp}
\hat{x}^{(i)}(t)
=
\begin{bmatrix}
\hat{x}_o^{(i)}(t)\\
\hat{x}_u^{(i)}(t)
\end{bmatrix},
\qquad
\hat{x}(t)=T_i\hat{x}^{(i)}(t).
\end{equation}
The observer dynamics in transformed coordinates are
%%%%%%%%%%%%%%%%%%%%%%%%%
\begin{align}
\dot{\hat{x}}_o^{(i)}(t)
&=
A_o^{(i)}\hat{x}_o^{(i)}(t)
+
B_o^{(i)}u(t)
\\
&\quad
+
L_o^{(i)}
\Big(
y_i(t)-C_o^{(i)}\hat{x}_o^{(i)}(t)
\Big),
\label{e:xhat_o_dyn}
\\
\dot{\hat{x}}_u^{(i)}(t)
&=
A_{uo}^{(i)}\hat{x}_o^{(i)}(t)
+
A_u^{(i)}\hat{x}_u^{(i)}(t)
+
B_u^{(i)}u(t).
\label{e:xhat_u_dyn}
\end{align}
%%%%%%%%%%%%%%%%%%%%%%%%%%%%%%
\subsection{Estimation Error Dynamics}

Let $\tilde{x}^{(i)}(t)=\hat{x}^{(i)}(t)-x^{(i)}(t)$ and partition 
it using the transformation matrix $T_i$ to get
\begin{equation}\label{e:xtilde_decomp}
\tilde{x}^{(i)}(t)
=T_i^{-1} \tilde{x}(t)= 
\begin{bmatrix}
\tilde{x}_o^{(i)}(t)\\
\tilde{x}_u^{(i)}(t)
\end{bmatrix}.
\end{equation}
From \eqref{e:xhat_o_dyn} and \eqref{e:xhat_u_dyn} we conclude that the estimation error dynamics satisfy
%%%
\begin{align}
\dot{\tilde{x}}_o^{(i)}(t)
&=
\Lambda_o^{(i)}\,\tilde{x}_o^{(i)}(t),
\label{e:xtilde_o_dyn}\\
\dot{\tilde{x}}_u^{(i)}(t)
&=
A_{uo}^{(i)}\,\tilde{x}_o^{(i)}(t)
+
A_u^{(i)}\,\tilde{x}_u^{(i)}(t).
\label{e:xtilde_u_dyn}
\end{align}
%%%
where $\Lambda_o^{(i)} \coloneqq A_o^{(i)}-L_o^{(i)}C_o^{(i)}$.
Since $\big(A_o^{(i)},C_o^{(i)}\big)$ is observable, $L_o^{(i)}$ can be selected
such that $\Lambda_o^{(i)}$ is Hurwitz, yielding exponential decay of
$\tilde{x}_o^{(i)}(t)$.

%%%%%%%%%%%%%%%%%%%%%%%%%%%%%%%%%%%%%%%%%%%%%%%%%%%%%%%%
\subsection{Gain Embedding in Original Coordinates}

The observer gains are designed in the transformed coordinates, where the system admits an observable/unobservable decomposition with respect to each output channel. However, the observer dynamics are implemented in the original state coordinates. Therefore, this subsection derives the corresponding gains in the original coordinates by mapping them through the inverse transformation, ensuring that the designed correction terms can be directly applied to the original system.

Let $0_{u}^{(i)} \in \mathbb{R}^{(n-r_i)\times p_i}$ denote a zero matrix, and define the gain in transformed coordinates as
\begin{equation}\label{e:L_transformed}
L^{(i)}
=
\begin{bmatrix}
L_o^{(i)}\\
0_{u}^{(i)}
\end{bmatrix}.
\end{equation}
The corresponding gain in the original coordinates is given by
\begin{equation}\label{e:L_original}
L_i = T_i L^{(i)}.
\end{equation}

%%%%%%%%%%%%%%%%%%%%%%%%%%%%%%%%%%%%%%%%%%%%%%%%%%%%%%%%%%
\subsection{Switched Implementation}\label{Sec_SwitchObs:SwitchedImplementation}

We now reintroduce the cycle index $k$ to describe the switched implementation.
Over the $k$th cycle $[t_{2k},t_{2k+2})$, the observer uses the channel-dependent
gains $L_{1,k}$ and $L_{2,k}$ on the two subintervals. Defining the full-state
estimation error as $\tilde{x}(t)=\hat{x}(t)-x(t)$, the error dynamics satisfy
\begin{equation}\label{e:xtilde_dyn_piecewise}
\dot{\tilde{x}}(t)=
\begin{cases}
\big(A-L_{1,k}C_1\big)\tilde{x}(t),
& t \in [t_{2k},\, t_{2k+1}), \\[2pt]
\big(A-L_{2,k}C_2\big)\tilde{x}(t),
& t \in [t_{2k+1},\, t_{2k+2}),
\end{cases}
\end{equation}
for all $k\in\mathbb{Z}_{\ge 0}$.

\noindent
Note that the matrices $A-L_{i,k}C_i$ are not required to be Hurwitz.
Indeed, since $(A,C_i)$ may be undetectable, the estimation
error may temporarily grow on a subinterval. Stability is instead established
through a cycle-wise analysis that exploits information accumulation across the
two outputs over each switching cycle.

%%%%%%%%%%%%%%%%%%%%%%%%%%%%%%%%%%%%%%%%
The observable estimation-error components satisfy the following switched dynamics over the $k$th cycle $[t_{2k},t_{2k+2})$:
\begin{equation}\label{e:xo1_dynamics_piecewise}
\dot{\tilde x}_o^{(1)}(t)=
\begin{cases}
\Lambda_{o,k}^{(1)}\,\tilde x_o^{(1)}(t),
& t \in [t_{2k},\, t_{2k+1}), \\[4pt]
A_o^{(1)}\,\tilde x_o^{(1)}(t),
& t \in [t_{2k+1},\, t_{2k+2}),
\end{cases}
\end{equation}
and
\begin{equation}\label{e:xo2_dynamics_piecewise}
\dot{\tilde x}_o^{(2)}(t)=
\begin{cases}
A_o^{(2)}\,\tilde x_o^{(2)}(t),
& t \in [t_{2k},\, t_{2k+1}), \\[4pt]
\Lambda_{o,k}^{(2)}\,\tilde x_o^{(2)}(t),
& t \in [t_{2k+1},\, t_{2k+2}).
\end{cases}
\end{equation}

The corresponding closed-form solutions are
\begin{equation}\label{e:xo1_time_development}
\tilde x_o^{(1)}(t)=
\begin{cases}
e^{\Lambda_{o,k}^{(1)}(t-t_{2k})}\,\tilde x_o^{(1)}(t_{2k}),
& t\in[t_{2k},t_{2k+1}),\\[2mm]
e^{A_o^{(1)}(t-t_{2k+1})}\,\tilde x_o^{(1)}(t_{2k+1}),
& t\in[t_{2k+1},t_{2k+2}),
\end{cases}
\end{equation}
and
\begin{equation}\label{e:xo2_time_development}
\tilde x_o^{(2)}(t)=
\begin{cases}
e^{A_o^{(2)}(t-t_{2k})}\,\tilde x_o^{(2)}(t_{2k}),
& t\in[t_{2k},t_{2k+1}),\\[2mm]
e^{\Lambda_{o,k}^{(2)}(t-t_{2k+1})}\,\tilde x_o^{(2)}(t_{2k+1}),
& t\in[t_{2k+1},t_{2k+2}).
\end{cases}
\end{equation}

Define the logarithmic norm (matrix measure) associated with the induced norm as
\begin{equation}\label{e:mu_o_i_def}
\mu^{(i)}
\coloneqq
\lambda_{\max}\!\left(\frac{A_o^{(i)}+\big(A_o^{(i)}\big)^\top}{2}\right),
\qquad i\in\{1,2\}.
\end{equation}

Using \Cref{d:log_norm}, the transition matrices corresponding to the observable components evolving without correction can be bounded, for all $t \ge t_0$, as
\begin{equation}\label{e:exp_bound_Ao_i}
\big\|e^{A_o^{(i)}(t-t_0)}\big\| \le e^{\mu^{(i)}(t-t_0)}.
\end{equation}
We next establish an exponential bound for the observable components on intervals where the corresponding channel is active. To this end, we require the following lemmas.

\begin{lemma}[Shift invariance of observability]\label{t:shift_observability}
Let $A \in \mathbb{R}^{n\times n}$ and $C \in \mathbb{R}^{p\times n}$.
If the pair $(A,C)$ is observable, then for any scalar $\eta \in \mathbb{R}$ the shifted pair
$(A+\eta I,\,C)$ is also observable.
\end{lemma}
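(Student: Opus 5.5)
The most direct route is through the Popov--Belevitch--Hautus (PBH) test. The alternative is to work with the observability matrix $\mathcal{O}(A,C)$ from \Cref{d:obs_detect}, where the rank condition \eqref{e:observability_rank} is the definition of observability. Both routes are elementary, and I would present the second, since the paper has only introduced the Kalman rank definition and that route needs no further background.

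The plan is to show $\ker \mathcal{O}(A+\eta I,C)=\ker \mathcal{O}(A,C)$. Since the latter kernel is trivial by assumption, the shifted observability matrix then has rank $n$. The key step is a binomial expansion. Because $\eta I$ commutes with $A$, for each $j\in\{0,\dots,n-1\}$,
\begin{equation*}
C(A+\eta I)^j=\sum_{m=0}^{j}\binom{j}{m}\eta^{\,j-m}\,CA^{m}.
\end{equation*}
Stacking these identities gives
\[
\mathcal{O}(A+\eta I,C)=M_\eta\,\mathcal{O}(A,C),
\]
where $M_\eta$ is the block lower-triangular matrix whose $(j,m)$ block is $\binom{j}{m}\eta^{j-m}I_p$ for $m\le j$ and zero otherwise.

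Every diagonal block of $M_\eta$ is $I_p$, so $M_\eta$ is invertible; its inverse is $M_{-\eta}$, by the same expansion applied to $A=(A+\eta I)-\eta I$. Multiplying by an invertible matrix preserves rank, so $\operatorname{rank}\mathcal{O}(A+\eta I,C)=\operatorname{rank}\mathcal{O}(A,C)=n$, and this is exactly the claim.

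There is no real obstacle. The only point needing care is bookkeeping: $C(A+\eta I)^j$ involves only the powers $CA^m$ with $m\le j\le n-1$, so no power $A^n$ appears that would force a Cayley--Hamilton argument, and the triangular structure of $M_\eta$ is what makes invertibility immediate.

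For a one-line alternative, I could instead cite the PBH test. Since $(A+\eta I-\lambda I)v=0$ if and only if $(A-(\lambda-\eta)I)v=0$, eigenvectors of $A+\eta I$ and $A$ coincide. Hence no eigenvector of $A+\eta I$ lies in $\ker C$ whenever none of $A$ does. I would mention this only as a remark, because it relies on a result the paper has not introduced.
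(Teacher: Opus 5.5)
Your proof is correct and matches the paper's argument: you expand $C(A+\eta I)^j$ binomially to get $\mathcal{O}(A+\eta I,C)=M_\eta\,\mathcal{O}(A,C)$, where $M_\eta$ is block lower-triangular with identity diagonal blocks, so $M_\eta$ is nonsingular and the rank $n$ is preserved. The PBH remark is a valid alternative, but it is not needed.
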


\begin{proof}
See the Appendix.
\end{proof}

%%%%%%%%%%%%%%%%%%%%%%%%%%%%%%%%%%%%%%%%%%%%%%%%%%%%%%%%%%%%%%%
\smallskip
\begin{lemma}[Uniform exponential Euclidean bound]
\label{t:uniform-euclidean-bound}
Let $\Lambda\in\mathbb{R}^{n\times n}$ be Hurwitz. Then for any
$0<\eta<-\alpha(\Lambda)$, where
\begin{equation}
\alpha(\Lambda)
\coloneqq
\max_j \Re\big(\lambda_j(\Lambda)\big),
\end{equation}
there exists a constant $\kappa\ge 1$ such that the solution of
$\dot{\tilde{x}}(t)=\Lambda\tilde{x}(t)$ satisfies
\begin{equation}
\|\tilde{x}(t)\|
\le
\kappa\,e^{-\eta (t-t_0)}\,\|\tilde{x}(t_0)\|,
\qquad t\ge t_0.
\end{equation}
\end{lemma}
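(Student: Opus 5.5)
The plan is to reduce the claim to a Lyapunov argument for a shifted matrix, using \Cref{t:lyap_Lambda}. Fix $0<\eta<-\alpha(\Lambda)$ and set $\Lambda_\eta \coloneqq \Lambda+\eta I$. The eigenvalues of $\Lambda_\eta$ are $\lambda_j(\Lambda)+\eta$, so their real parts satisfy $\Re(\lambda_j(\Lambda))+\eta\le \alpha(\Lambda)+\eta<0$. Hence $\Lambda_\eta$ is Hurwitz. By \Cref{t:lyap_Lambda} with $Q=I$, there is a unique $P=P^\top\succ 0$ such that
\begin{equation*}
\Lambda_\eta^\top P+P\Lambda_\eta=-I .
\end{equation*}
Equivalently, $\Lambda^\top P+P\Lambda=-I-2\eta P$.

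Next take $V(\tilde x)=\tilde x^\top P\tilde x$ along solutions of $\dot{\tilde x}=\Lambda\tilde x$. Differentiating gives
\begin{equation*}
\dot V=\tilde x^\top(\Lambda^\top P+P\Lambda)\tilde x=-\|\tilde x\|^2-2\eta V\le -2\eta V .
\end{equation*}
The comparison lemma (or integrating $\tfrac{d}{dt}\big(e^{2\eta(t-t_0)}V\big)\le 0$) then yields $V(t)\le e^{-2\eta(t-t_0)}V(t_0)$. Sandwich $V$ using $\lambda_{\min}(P)\|\tilde x\|^2\le V\le\lambda_{\max}(P)\|\tilde x\|^2$, which follows from the singular-value bounds in \Cref{t:basic_norm} applied to $P^{1/2}$. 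Taking square roots gives
\begin{equation*}
\|\tilde x(t)\|\le \kappa\,e^{-\eta(t-t_0)}\|\tilde x(t_0)\|,
\qquad
\kappa\coloneqq\sqrt{\lambda_{\max}(P)/\lambda_{\min}(P)}\ge 1 .
\end{equation*}
The inequality $\kappa\ge 1$ is automatic, since $\lambda_{\max}(P)\ge\lambda_{\min}(P)$.

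There is no real obstacle here; the step needing the most care is the spectral-shift observation that $\Lambda+\eta I$ remains Hurwitz precisely because $\eta<-\alpha(\Lambda)$. A minor point worth stating is that the constant $\kappa$ depends on $\eta$ and $\Lambda$ but not on $t_0$ or the initial condition. This is what makes the bound uniform and usable later, cycle by cycle, with $\Lambda=\Lambda^{(i)}_{o,k}$.

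An alternative route is to write the Jordan form $\Lambda=SJS^{-1}$ and bound $\|e^{Jt}\|$ by a polynomial in $t$ times $e^{\alpha(\Lambda)t}$, absorbing the polynomial into $e^{-\eta t}$. I would prefer the Lyapunov route, because it gives an explicit $\kappa$ consistent with the LMI framework of the paper.
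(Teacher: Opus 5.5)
Your proposal is correct and follows essentially the same route as the paper: shift to the Hurwitz matrix $\Lambda+\eta I$, solve its Lyapunov equation (the paper allows any $Q\succ 0$, you take $Q=I$), deduce $\dot V\le -2\eta V$ for $V=\tilde x^\top P\tilde x$, and sandwich to obtain $\kappa=\sqrt{\lambda_{\max}(P)/\lambda_{\min}(P)}\ge 1$. You also spell out the eigenvalue-shift justification and the $\kappa\ge 1$ check, which the paper leaves implicit.
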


\smallskip
\begin{proof}
See the Appendix.
\end{proof}

%%%%%%%%%%%%%%%%%%%%%%%%%%%%%%%%%%%%%%%%%%%%%%%%%%%%%%%%%%%%%%%%%%%%%%%%%%%%%%%%%%%%%%%%%%%%%%%%%%%%%%%%%%%%%%%%%%%%%%%%%%%%%%%%%%%%%%%%%%%%%%%%%%%%%%%%%%%
\smallskip
\begin{lemma}[Observer gain with prescribed decay rate]
\label{t:observer-euclidean-bound}
Let $(A,C)$ be observable. Then for any desired $\eta>0$, there exists a gain
$L$ such that $\Lambda \coloneqq A-LC$ is Hurwitz and satisfies
$\alpha(\Lambda)<-\eta$. Consequently, the estimation-error dynamics
$\dot{\tilde{x}}(t)=\Lambda\tilde{x}(t)$ satisfy
\begin{equation}
\|\tilde{x}(t)\|
\le
\kappa\,e^{-\eta (t-t_0)}\,\|\tilde{x}(t_0)\|,
\qquad t\ge t_0,
\end{equation}
for some $\kappa\ge1$.
\end{lemma}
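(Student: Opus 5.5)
The plan is to reduce the prescribed-decay requirement to plain stabilization of a shifted pair, and then read off the exponential bound from \Cref{t:uniform-euclidean-bound}. Fix $\eta>0$ and choose any $\eta'>\eta$ (for instance $\eta'=2\eta$). By \Cref{t:shift_observability}, the pair $(A+\eta' I,\,C)$ is observable, hence in particular detectable. \Cref{t:stab_observer_gain} then provides a gain $L$ such that $\Lambda' \coloneqq A+\eta' I - LC$ is Hurwitz, i.e.\ $\alpha(\Lambda')<0$.

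Set $\Lambda\coloneqq A-LC=\Lambda'-\eta' I$. Since subtracting $\eta' I$ shifts every eigenvalue by $-\eta'$ and leaves the eigenvectors unchanged, we get $\lambda_j(\Lambda)=\lambda_j(\Lambda')-\eta'$. Hence
\[
\alpha(\Lambda)=\alpha(\Lambda')-\eta'<-\eta'<-\eta .
\]
Thus $\Lambda$ is Hurwitz and satisfies the required spectral-abscissa bound. If one prefers to avoid the shift argument, full observability also permits direct pole placement: the eigenvalues of $A-LC$ can be assigned arbitrarily (up to conjugate symmetry), for example all with real part $-\eta'$.

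For the estimate, observe that $0<\eta<\eta'<-\alpha(\Lambda)$. We may therefore apply \Cref{t:uniform-euclidean-bound} to $\Lambda$ with decay rate $\eta$, which yields $\kappa\ge 1$ such that solutions of $\dot{\tilde x}=\Lambda\tilde x$ satisfy
\[
\|\tilde x(t)\|\le \kappa e^{-\eta(t-t_0)}\|\tilde x(t_0)\| \quad \text{for all } t\ge t_0 .
\]

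No step is a serious obstacle. The only point needing care is the strict inequality: the decay rate fed into \Cref{t:uniform-euclidean-bound} must lie strictly below $-\alpha(\Lambda)$, and introducing the slack $\eta'>\eta$ guarantees this. It also covers the case where $\Lambda$ has non-semisimple eigenvalues, for which $\kappa$ cannot be taken equal to $1$.
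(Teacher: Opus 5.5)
Your proof is correct. The final step, invoking \Cref{t:uniform-euclidean-bound} with rate $\eta<-\alpha(\Lambda)$, is the same as the paper's; the only difference is how you obtain the gain $L$. The paper appeals directly to arbitrary eigenvalue assignment for the observable pair $(A,C)$ and picks $L$ with $\alpha(A-LC)<-\eta$. You instead apply \Cref{t:shift_observability} to $(A+\eta' I,C)$, use only the weaker existence of a stabilizing gain from \Cref{t:stab_observer_gain}, and then shift the spectrum back. Incidentally, this gives the shift-invariance lemma a role: the paper states it but its written proof never uses it.

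Your route assumes less and generalizes at no cost. By the same binomial argument, $\ker\mathcal O(A+\eta' I,C)=\ker\mathcal O(A,C)$, so taking $\eta'=\eta$ the argument only needs $(A+\eta I,C)$ to be detectable. That means every eigenvalue of $A$ on its unobservable subspace must have real part below $-\eta$, which is exactly the condition under which some $L$ can achieve $\alpha(A-LC)<-\eta$, since unobservable eigenvalues are fixed under output injection. The paper's route needs full observability, but it gives explicit control of the closed-loop spectrum, which Lemma A.2 later exploits.

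Two remarks in your last paragraph are inaccurate, though neither affects the proof. First, the slack $\eta'>\eta$ is unnecessary: $\alpha(\Lambda')<0$ already gives $\alpha(A-LC)<-\eta'$ strictly, even with $\eta'=\eta$. Second, non-semisimple eigenvalues do not by themselves rule out $\kappa=1$. In the $2$-norm, $\kappa=1$ works exactly when $\mu(\Lambda)\le-\eta$, and this holds for $\begin{bmatrix}-a&1\\0&-a\end{bmatrix}$ whenever $\eta\le a-\tfrac12$.
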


\begin{proof}
See the Appendix.
\end{proof}

By \Cref{t:observer-euclidean-bound}, for each $i \in \{1,2\}$, the gain $L_{o,k}^{(i)}$ can be chosen such that $\Lambda_{o,k}^{(i)}$ is Hurwitz with $\alpha\!\big(\Lambda_{o,k}^{(i)}\big)<-\eta_k^{(i)}$. Consequently, the corresponding transition matrices satisfy the Euclidean exponential bound
\begin{equation}\label{e:Lambda_ok_exp_bound}
\big\|e^{\Lambda_{o,k}^{(i)}(t-t_0)}\big\|
\le
\kappa_k^{(i)}\,e^{-\eta_k^{(i)}(t-t_0)},
\qquad t \ge t_0,\quad i \in \{1,2\},
\end{equation}
for some constants $\kappa_k^{(i)} \ge 1$.

To facilitate the design of the Luenberger gains, we introduce scalar envelope signals $z_1:\mathbb{R}_{\ge 0}\to\mathbb{R}_{\ge 0}$ and
$z_2:\mathbb{R}_{\ge 0}\to\mathbb{R}_{\ge 0}$ that upper bound the observable estimation-error norms:
\begin{equation}\label{e:composite_bounds}
\|\tilde x_o^{(1)}(t)\| \le z_1(t),
\qquad
\|\tilde x_o^{(2)}(t)\| \le z_2(t),
\qquad t\ge 0.
\end{equation}
The envelopes are defined to propagate on each subinterval of the $k$th switching cycle in accordance with the bounds
\eqref{e:exp_bound_Ao_i} and \eqref{e:Lambda_ok_exp_bound}.
In particular, over $[t_{2k},t_{2k+2})$,  we define
\begin{equation}\label{e:z1_evolution}
{\small
z_1(t):=
\begin{cases}
\displaystyle
\kappa_{k}^{(1)}\, e^{-\eta_{k}^{(1)}(t - t_{2k})}\, z_1(t_{2k}),
& t \in [t_{2k},\, t_{2k+1}), \\[6pt]
\displaystyle
e^{\mu^{(1)}(t - t_{2k+1})}\, z_1(t_{2k+1}),
& t \in [t_{2k+1},\, t_{2k+2}),
\end{cases}
}
\end{equation}
and
\begin{equation}\label{e:z2_evolution}
{\small
z_2(t):=
\begin{cases}
\displaystyle
e^{\mu^{(2)}(t - t_{2k})}\, z_2(t_{2k}),
& t \in [t_{2k},\, t_{2k+1}), \\[6pt]
\displaystyle
\kappa_{k}^{(2)}\, e^{-\eta_{k}^{(2)}(t - t_{2k+1})}\, z_2(t_{2k+1}),
& t \in [t_{2k+1},\, t_{2k+2}),
\end{cases}
}
\end{equation}

Since consecutive time segments with the same available output can be merged, we may represent the schedule as an alternating sequence over intervals $[t_{2k},t_{2k+2})$ with nonuniform lengths. Define
$\tau_k\coloneqq t_{2k+2}-t_{2k}$ and
$s_k\coloneqq (t_{2k+1}-t_{2k})/\tau_k$.
Then $y_1$ is available on $[t_{2k},t_{2k+1})$ for $s_k\tau_k$ and $y_2$ is available on
$[t_{2k+1},t_{2k+2})$ for $(1-s_k)\tau_k$.
Introduce the sampled bound vector
%%%
\begin{equation}\label{e:Zk_def}
Z_k \coloneqq
\begin{bmatrix}
z_1(t_{2k})\\[2pt]
z_2(t_{2k})
\end{bmatrix}.
\end{equation}
%%%
Evaluating \eqref{e:z1_evolution}--\eqref{e:z2_evolution} at $t=t_{2k+1}$ and $t=t_{2k+2}$ yields the linear recursion
%%%
\begin{equation}\label{e:Fz_system}
Z_{k+1}=F_k\,Z_k,
\end{equation}
%%%
where

\begin{equation}\label{e:Fk}
F_k
=
\begin{bmatrix}
\kappa_{k}^{(1)}\, e^{\beta_{1,k}(s_k)\,\tau_k} & 0\\[4pt]
0 & \kappa_{k}^{(2)}\, e^{\beta_{2,k}(s_k)\,\tau_k}
\end{bmatrix},
\end{equation}

in which,
\begin{equation}\label{e:beta1_def}
\beta_{1,k}(s_k)\coloneqq \mu^{(1)}(1-s_k)-\eta_k^{(1)} s_k,
\end{equation}
and
\begin{equation}\label{e:beta2_def}
\beta_{2,k}(s_k)\coloneqq \mu^{(2)} s_k-\eta_k^{(2)}(1-s_k).
\end{equation}

The observer design objective is thus achieved if the Luenberger gains are desined to make the discrete-time dynamical system in \eqref{e:Fk} asymptotically stable.

\section{Geometric Property of the Orthogonal Decomposition}\label{s:Orthogonal_Decom}

The following lemma records a basic geometric property that will be used repeatedly.

\smallskip
\begin{lemma}[Block norm preservation]\label{t:norm_preserve_blocks}
Let $T\in\mathbb R^{n\times n}$ be an orthogonal matrix and partition it as
$T=[\,T_o\ \ T_u\,]$, where $T_o\in\mathbb R^{n\times r}$ and
$T_u\in\mathbb R^{n\times (n-r)}$. Then, for any estimation-error
components $\tilde x_o\in\mathbb R^{r}$ and $\tilde x_u\in\mathbb R^{n-r}$,
\begin{equation}\label{e:Norm_Preserve}
\|T_o \tilde x_o\|=\|\tilde x_o\|,
\qquad
\|T_u \tilde x_u\|=\|\tilde x_u\|.
\end{equation}
\end{lemma}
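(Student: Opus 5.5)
The claim follows directly from the column orthonormality inherited from $T^\top T=I$. Writing out the block structure,
\begin{equation*}
T^\top T=\begin{bmatrix} T_o^\top T_o & T_o^\top T_u\\ T_u^\top T_o & T_u^\top T_u\end{bmatrix}=\begin{bmatrix} I_r & 0\\ 0 & I_{n-r}\end{bmatrix},
\end{equation*}
we read off $T_o^\top T_o=I_r$ and $T_u^\top T_u=I_{n-r}$. In other words, the columns of each block are orthonormal; $T_o$ and $T_u$ are isometric embeddings, though not square.

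Next we use the definition of the Euclidean norm. For any $\tilde x_o\in\mathbb R^{r}$,
\begin{equation*}
\|T_o\tilde x_o\|^2=\tilde x_o^\top T_o^\top T_o\tilde x_o=\tilde x_o^\top\tilde x_o=\|\tilde x_o\|^2,
\end{equation*}
and taking nonnegative square roots gives the first identity in \eqref{e:Norm_Preserve}. The identical computation with $T_u^\top T_u=I_{n-r}$ gives the second.

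As a cross-check, this is consistent with the singular-value bounds of \Cref{t:basic_norm}. Since $T_o^\top T_o=I_r$, all singular values of $T_o$ equal one, so $\rho_{\min}(T_o)=\rho_{\max}(T_o)=1$ and \eqref{e:sv_bounds_two_sided} collapses to equality.

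There is no real obstacle. The only point needing care is that $T_o$ and $T_u$ are rectangular, so we rely on $T^\top T=I$ (orthonormal columns) rather than on $TT^\top=I$. Only the diagonal blocks of $T^\top T$ are used.
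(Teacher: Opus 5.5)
Your proof is correct and matches the paper's argument essentially line for line: both read off $T_o^\top T_o=I_r$ and $T_u^\top T_u=I_{n-r}$ from the diagonal blocks of $T^\top T=I_n$, then expand $\|T_o\tilde x_o\|^2=\tilde x_o^\top T_o^\top T_o\tilde x_o=\|\tilde x_o\|^2$ (and likewise for $T_u$). Your singular-value cross-check is a harmless addition that the paper does not include.
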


\begin{proof}
See the Appendix.
\end{proof}

%%%%%%%%%%%%%%%%%%%%%%%%%%%%%%%%%%%%%%%%%%%%%%%%%%
\smallskip
\begin{lemma}[Stacked-output observability matrix]\label{t:O_stack}
Let  $C_1\in\mathbb R^{p_1\times n}$ and $C_2\in\mathbb R^{p_2\times n}$ and define
$C=\begin{bmatrix}C_1^\top & C_2^\top\end{bmatrix}^\top$.
Consider observability matrices $\mathcal O\coloneqq\mathcal O(A,C)$ and $\mathcal O_i \coloneqq \mathcal O(A,C_i), \quad i=1,2$. 
Then there exists a permutation matrix $M \in \mathbb{R}^{(p_1+p_2)n\times n}$ such that 
\begin{equation}\label{e:O_stack_identity}
\mathcal O
=M
\begin{bmatrix}
\mathcal O_1\\
\mathcal O_2
\end{bmatrix}.
\end{equation}
Consequently, if $\operatorname{rank}(\mathcal O)=n$ and $r_i \coloneqq\operatorname{rank}(\mathcal O_i)$, then
\begin{equation}\label{e:r1_r2_ge_n_from_O}
r_1+r_2\ge n.
\end{equation}
\end{lemma}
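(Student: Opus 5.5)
The identity \eqref{e:O_stack_identity} is a pure bookkeeping statement about row blocks, and the rank inequality \eqref{e:r1_r2_ge_n_from_O} follows from it by a kernel argument. (The permutation $M$ is really square of size $(p_1+p_2)n$; the stated dimension of $M$ is a typo, and we treat $M$ as a $(p_1+p_2)n\times(p_1+p_2)n$ permutation matrix.)

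\textbf{Row-block identity.} Write out $\mathcal O$ in block rows. By \eqref{e:obs_matrix} and \eqref{e:C_agg}, the $j$th block row of $\mathcal O$, for $j=0,\dots,n-1$, is
\[
CA^j=\begin{bmatrix}C_1A^j\\ C_2A^j\end{bmatrix}.
\]
Hence the rows of $\mathcal O$ are exactly the rows of $C_1A^j$ and $C_2A^j$, interleaved as
\[
C_1,\ C_2,\ C_1A,\ C_2A,\ \dots
\]
The stacked matrix $\begin{bmatrix}\mathcal O_1^\top & \mathcal O_2^\top\end{bmatrix}^\top$ contains the same rows, ordered instead as all $C_1A^j$ blocks first and then all $C_2A^j$ blocks. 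Define $M$ as the permutation matrix that sends each row of the stacked matrix to its position in $\mathcal O$. Explicitly, $M$ places the $C_1A^j$ rows at positions $j(p_1+p_2)+1,\dots,j(p_1+p_2)+p_1$ and the $C_2A^j$ rows immediately after them. With this $M$, \eqref{e:O_stack_identity} holds entrywise by construction.

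\textbf{Rank consequence.} Since $M$ is invertible, $\operatorname{rank}(\mathcal O)=\operatorname{rank}\begin{bmatrix}\mathcal O_1\\ \mathcal O_2\end{bmatrix}$. The kernel of the stacked matrix is $\ker\mathcal O_1\cap\ker\mathcal O_2$, so $\operatorname{rank}(\mathcal O)=n$ forces
\[
\ker\mathcal O_1\cap\ker\mathcal O_2=\{0\}.
\]
By rank–nullity, $\dim\ker\mathcal O_i=n-r_i$. Two subspaces of $\mathbb R^n$ with trivial intersection satisfy $\dim U+\dim V\le n$, which gives
\[
(n-r_1)+(n-r_2)\le n,\quad\text{i.e.}\quad r_1+r_2\ge n.
\]
Alternatively, subadditivity of rank under row stacking gives $n\le r_1+r_2$ directly.

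The only step requiring care is the explicit indexing of the permutation $M$; the rest is routine linear algebra.
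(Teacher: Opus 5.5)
Your proposal is correct and follows the paper's proof: the same interleaving permutation of the row blocks $C_1A^j, C_2A^j$, then rank invariance under $M$ and subadditivity of rank under row stacking. Your kernel-intersection dimension count is just an equivalent restatement of that last step, and you are right that $M$ must be square of size $(p_1+p_2)n$, so the stated dimension is a typo.
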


\begin{proof}
See the Appendix.
\end{proof}
%%%%%%%%%%%%%%%%%%%%%%%%%%%%%%%%%%%%%%%%%%%%%%%%%
\begin{theorem}    
[Error bound via observable parts]\label{t:xtilde_bound}
Consider the estimation error $\tilde x(t)\in\mathbb R^{n}$ and two orthogonal similarity transformations
$T_1,T_2\in\mathbb R^{n\times n}$ such that
\begin{equation}\label{e:xtilde_T1_T2}
\tilde{x}(t)=T_1\,\tilde{x}^{(1)}(t)=T_2\,\tilde{x}^{(2)}(t).
\end{equation}
Let the corresponding observability decompositions be
\begin{equation}\label{e:T_partition}
T_i=\begin{bmatrix}T_{io} & T_{iu}\end{bmatrix},\qquad
\tilde x^{(i)}(t)=
\begin{bmatrix}
\tilde x_o^{(i)}(t)\\
\tilde x_u^{(i)}(t)
\end{bmatrix},\quad i\in\{1,2\}.
\end{equation}
Assume that:
\begin{enumerate}
\item[(i)] $(A,C_i)$ is unobservable for each $i\in\{1,2\}$, and
\item[(ii)] $(A,C)$ is observable with $C=\begin{bmatrix}C_1^\top & C_2^\top\end{bmatrix}^\top$.
\end{enumerate}
Then there exists a constant $\alpha\ge 1$ such that, for all $t\ge 0$,
%%%
\begin{equation}\label{e:xtilde_bound_by_two_observable_parts}
\|\tilde x(t)\|
\le
\alpha\Big(\|\tilde x_o^{(1)}(t)\|+\|\tilde x_o^{(2)}(t)\|\Big).
\end{equation}
%%%
\end{theorem}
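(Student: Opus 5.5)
We will prove the bound by a purely geometric argument that does not use the error dynamics at all. The key observation is that the observable components are orthogonal projections of $\tilde x(t)$ onto fixed subspaces. Since each $T_i$ is orthogonal, \eqref{e:xtilde_T1_T2} gives $\tilde x^{(i)}(t)=T_i^\top\tilde x(t)$, so that
\[
\tilde x_o^{(i)}(t)=T_{io}^\top\,\tilde x(t),\qquad i\in\{1,2\}.
\]
Stacking these two relations, we obtain
\[
\begin{bmatrix}\tilde x_o^{(1)}(t)\\ \tilde x_o^{(2)}(t)\end{bmatrix}=H\,\tilde x(t),
\qquad
H\coloneqq\begin{bmatrix}T_{1o}^\top\\ T_{2o}^\top\end{bmatrix}\in\mathbb R^{(r_1+r_2)\times n}.
\]
The plan is to show that $H$ has full column rank $n$. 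The lower singular-value bound in \Cref{t:basic_norm} then gives $\rho_{\min}(H)\|\tilde x\|\le\|H\tilde x\|$ with $\rho_{\min}(H)>0$. Using $\|H\tilde x\|\le\|\tilde x_o^{(1)}\|+\|\tilde x_o^{(2)}\|$, we may take
\[
\alpha\coloneqq\max\{1,\;1/\rho_{\min}(H)\}.
\]
This $\alpha$ is a constant independent of $t$, because $T_1$ and $T_2$ are fixed.

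The main step is to establish that $\ker H=\{0\}$. We would first record, from the proof of \Cref{t:decomposition} (orthogonal Kalman decomposition), that the columns of $T_{iu}$ span the unobservable subspace $\mathcal N(A,C_i)=\ker\mathcal O_i$. Because $T_i$ is orthogonal, it follows that
\[
\ker T_{io}^\top=\operatorname{range}(T_{io})^\perp=\operatorname{range}(T_{iu})=\ker\mathcal O_i.
\]
This identification is the step that must be checked most carefully: it relies on the specific construction in the Appendix, in which $T_{io}$ is an orthonormal basis of $(\ker\mathcal O_i)^\perp=\operatorname{range}(\mathcal O_i^\top)$. If the lemma as stated does not expose this, we would re-derive it directly. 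The block structure \eqref{e:C_obs_decomp} together with \eqref{e:A_obs_decomp} gives $\mathcal O_iT_i=[\,\mathcal O(A_o^{(i)},C_o^{(i)})\;\;0\,]$. Observability of $(A_o^{(i)},C_o^{(i)})$ then makes the first block injective, so $\ker\mathcal O_i=\operatorname{range}(T_{iu})$.

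Given this identification, any $v\in\ker H$ satisfies $v\in\ker\mathcal O_1\cap\ker\mathcal O_2$. By \Cref{t:O_stack}, $\mathcal O$ equals a row permutation of the stacked matrix $[\mathcal O_1;\mathcal O_2]$, so $\ker\mathcal O=\ker\mathcal O_1\cap\ker\mathcal O_2$. Assumption (ii) states that $\mathcal O$ has rank $n$, hence $v=0$.

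Assumption (i) serves only to ensure $r_i<n$, so that both blocks $T_{iu}$ are nontrivial; it is not needed for the inequality itself. As a consistency check, \eqref{e:r1_r2_ge_n_from_O} is the necessary dimension count $r_1+r_2\ge n$ for $H$ to have rank $n$.
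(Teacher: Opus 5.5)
Your proposal is correct and follows essentially the same route as the paper. The paper also uses the stacked extractor matrix $H=\begin{bmatrix}T_{1o}^\top\\ T_{2o}^\top\end{bmatrix}$ and the kernel argument $\ker H\subseteq\ker\mathcal O_1\cap\ker\mathcal O_2=\ker\mathcal O=\{0\}$, with $\alpha$ obtained from $\rho_{\min}(H)$. There are two minor differences. First, you take $\alpha=\max\{1,1/\rho_{\min}(H)\}$, whereas the paper uses assumption (i) to show $\rho_{\min}(H)\le 1$ directly. Second, you justify $\ker T_{io}^\top=\ker\mathcal O_i$ explicitly via $\mathcal O_iT_i=[\,\mathcal O(A_o^{(i)},C_o^{(i)})\;\;0\,]$, which the paper only asserts.
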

\smallskip
\begin{proof}
Define the observable-coordinate extractors
\begin{equation*}
S_{io}\coloneqq\begin{bmatrix}I & 0\end{bmatrix}T_i^\top = T_{io}^\top,
\qquad i\in\{1,2\},
\end{equation*}
so that $\tilde x_o^{(i)}(t)=S_{io}\tilde x(t)$.

\smallskip
Step 1 (kernel properties).
By assumption (i), $(A,C_i)$ is unobservable for each $i\in\{1,2\}$, hence
$\operatorname{rank}(\mathcal O_i)=:r_i<n$ for $i\in\{1,2\}$.
By \Cref{t:O_stack} and assumption (ii), $(A,C)$ is observable, i.e.,
$\operatorname{rank}(\mathcal O)=n$, and therefore
\begin{equation}\label{e:r1_r2_ge_n}
r_1+r_2\ge n.
\end{equation}

Moreover, by assumption (ii), for any $v\neq 0$ there exists $k\ge 0$ such that $CA^k v\neq 0$, i.e.,
\begin{equation*}
\big(\forall k\ge0,\ CA^k v=0\big)\ \Longrightarrow\ v=0.
\end{equation*}
Now let $v\in\ker(S_{1o})\cap\ker(S_{2o})$. Then $S_{1o}v=0$ and $S_{2o}v=0$.
In the orthogonal observability decomposition associated with $(A,C_i)$, the condition $S_{io}v=0$
implies $C_iA^k v=0$ for all $k\ge 0$. Hence $C_1A^k v=0$ and $C_2A^k v=0$ for all $k\ge0$, and thus
$CA^k v=0$ for all $k\ge0$. By the implication above, $v=0$. Therefore,
\begin{equation*}
\ker(S_{1o})\cap\ker(S_{2o})=\{0\}.
\end{equation*}

Step 2 (main bound).
Let
\begin{equation*}
H\coloneqq\begin{bmatrix}S_{1o}\\ S_{2o}\end{bmatrix},
\qquad
H\tilde x(t)=
\begin{bmatrix}
\tilde x_o^{(1)}(t)\\
\tilde x_o^{(2)}(t)
\end{bmatrix}.
\end{equation*}
Since $\ker(S_{1o})\cap\ker(S_{2o})=\{0\}$, we have $\ker(H)=\{0\}$ and hence $H$ has full column rank.
Therefore $\rho_{\min}(H)>0$. Define
\begin{equation}\label{e:alpha_def}
\alpha\coloneqq\frac{1}{\rho_{\min}(H)}.
\end{equation}
Then, for all $t\ge 0$,
\begin{equation*}
\|\tilde x(t)\|
\le
\alpha \left\|
\begin{bmatrix}
\tilde x_o^{(1)}(t)\\[1mm]
\tilde x_o^{(2)}(t)
\end{bmatrix}
\right\|
\le
\alpha\Big(\|\tilde x_o^{(1)}(t)\|+\|\tilde x_o^{(2)}(t)\|\Big),
\end{equation*}
which proves \eqref{e:xtilde_bound_by_two_observable_parts}.

\smallskip
Step 3 (show $\alpha\ge 1$).
Since $\ker(S_{2o})\neq\{0\}$, choose $v\in\ker(S_{2o})$ with $\|v\|=1$.
Because $T_1$ is orthogonal, the rows of $S_{1o}=T_{1o}^\top$ are orthonormal and hence $\|S_{1o}\|=1$.
Therefore,
\begin{equation*}
\|Hv\|
=
\left\|
\begin{bmatrix}
S_{1o}v\\
0
\end{bmatrix}
\right\|
=
\|S_{1o}v\|
\le 1.
\end{equation*}
Thus
\begin{equation*}
\rho_{\min}(H)=\min_{\|x\|=1}\|Hx\|
\le
\|Hv\|
\le
1,
\end{equation*}
which implies $\alpha=1/\rho_{\min}(H)\ge 1$.
\end{proof}

\section{ Main Results }\label{s:Main_Result}

\begin{theorem}[Estimation-error convergence]\label{t:main}
Consider the plant \eqref{e:system} under the alternating schedule
\eqref{e:switching_instants}, and the state-estimate dynamics
\eqref{e:switched_observer} with estimation error \eqref{e:est_error}.
Suppose that Assumptions~A1 and~A2 hold, and let the observer gain be given by
\eqref{e:L_sigma_cycle}.

Consider the envelope trajectories $z_1(\cdot)$ and $z_2(\cdot)$ defined in
\eqref{e:z1_evolution}, \eqref{e:z2_evolution}.
Let $\eta_k^{(1)}>0$ and $\eta_k^{(2)}>0$ be design parameters.
If, for some fixed $\delta_1,\delta_2\in(0,1)$,
\begin{equation}\label{e:eta1_bound}
\eta_k^{(1)}
\ \ge\
\frac{\mu^{(1)}(1-s_k)}{s_k}
\;+\;
\frac{1}{s_k\,\tau_k}\,
\ln\!\left(\frac{\kappa^{(1)}(\eta_k^{(1)})}{\delta_1}\right),
\end{equation}
and
\begin{equation}\label{e:eta2_bound}
\eta_k^{(2)}
\ \ge\
\frac{\mu^{(2)}\,s_k}{1-s_k}
\;+\;
\frac{1}{(1-s_k)\,\tau_k}\,
\ln\!\left(\frac{\kappa^{(2)}(\eta_k^{(2)})}{\delta_2}\right),
\end{equation}
hold for all $k\in\mathbb{Z}_{\ge0}$,
then the estimation error satisfies
\begin{equation}\label{e:xtilde_to_zero_main}
\|\tilde{x}(t)\|\to 0
\quad \text{as}\quad t\to\infty .
\end{equation}
\end{theorem}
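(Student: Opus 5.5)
The argument follows the envelope route set up in Section~\ref{s:Switch_Obs}: show that the sampled envelope vector $Z_k$ contracts geometrically, propagate the bound to intermediate times, and then recover the full error from its two observable parts via \Cref{t:xtilde_bound}.

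\textbf{Step 1: per-cycle contraction.} Rearranging \eqref{e:eta1_bound} by multiplying by $s_k\tau_k>0$ gives
\[
\eta_k^{(1)}s_k\tau_k-\mu^{(1)}(1-s_k)\tau_k\ \ge\ \ln\big(\kappa_k^{(1)}/\delta_1\big).
\]
Hence $\kappa_k^{(1)}e^{\beta_{1,k}(s_k)\tau_k}\le\delta_1$. In the same way, \eqref{e:eta2_bound} gives $\kappa_k^{(2)}e^{\beta_{2,k}(s_k)\tau_k}\le\delta_2$. Since $F_k$ in \eqref{e:Fk} is diagonal with nonnegative entries, $Z_{k+1}\le\delta Z_k$ componentwise, with $\delta\coloneqq\max\{\delta_1,\delta_2\}<1$. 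Therefore $\|Z_k\|\le\delta^k\|Z_0\|\to0$, with $Z_0=(\|\tilde x_o^{(1)}(0)\|,\|\tilde x_o^{(2)}(0)\|)$.

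\textbf{Step 2: validity of the envelopes and intra-cycle bounds.} By induction over subintervals, the closed-form solutions \eqref{e:xo1_time_development}--\eqref{e:xo2_time_development} together with \eqref{e:exp_bound_Ao_i} and \eqref{e:Lambda_ok_exp_bound} give $\|\tilde x_o^{(i)}(t)\|\le z_i(t)$. This uses the fact that the observable-coordinate error $\tilde x_o^{(i)}=T_{io}^\top\tilde x$ is continuous across switches, so each envelope starts each subinterval above the true norm.

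For $t\in[t_{2k},t_{2k+2})$ we then need a uniform bound on $z_i(t)/z_i(t_{2k})$.
\begin{itemize}
\item On the active subinterval, the ratio is at most $\kappa_k^{(i)}$.
\item On the inactive subinterval, it is at most $\kappa_k^{(i)}e^{|\mu^{(i)}|\bar\tau}$, since the passive phase has length at most $\bar\tau$ by \eqref{e:cycle_length_ub}.
\end{itemize}
For channel 1, a sharper bound is available: $z_1(t)\le z_1(t_{2k+1})e^{\max(\mu^{(1)},0)\bar\tau}$, and $z_1(t_{2k+1})\le\kappa_k^{(1)}z_1(t_{2k})$.

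\textbf{Step 3 (main obstacle): uniform boundedness of $\kappa_k^{(i)}$.} The bound above requires $\sup_k\kappa_k^{(i)}<\infty$. I expect this to be the delicate point, since $\kappa^{(i)}(\eta)$ may grow with $\eta$.

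The approach I would try first is to use the conditions themselves: \eqref{e:eta1_bound} forces
\[
\kappa_k^{(1)}\le\delta_1\exp\big(\eta_k^{(1)}s_k\tau_k-\mu^{(1)}(1-s_k)\tau_k\big).
\]
So it suffices that the chosen $\eta_k^{(i)}$ be bounded. I would argue this holds because $s_k\tau_k>\underline h$ and $(1-s_k)\tau_k>\underline h$ by A2. The right-hand sides of \eqref{e:eta1_bound}--\eqref{e:eta2_bound} then depend on $(s_k\tau_k,(1-s_k)\tau_k)$ only through quantities lying in the compact box $[\underline h,\bar h]^2$, so the gains can be selected from a compact family with $\eta_k^{(i)}\le\bar\eta$ and $\kappa_k^{(i)}\le\bar\kappa$.

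If that route fails, a cruder bound avoids the issue for channel 1: during the passive phase only $e^{\mu\bar\tau}$ multiplies $z_1(t_{2k+1})$, and $z_1(t_{2k+1})$ can be taken as the sampled value from a refined recursion at odd instants. Channel 2 is handled symmetrically using $t_{2k+2}$.

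\textbf{Step 4: conclusion.} Steps 2 and 3 give $\|\tilde x_o^{(i)}(t)\|\le c\,\delta^{k}\|Z_0\|$ for $t\in[t_{2k},t_{2k+2})$, with a constant $c$ independent of $k$. Since $t_{2k}\ge k\underline\tau$ and cycles have length at most $\bar\tau$, $k\to\infty$ as $t\to\infty$.

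To pass to the full error, I would apply \Cref{t:xtilde_bound} when both pairs $(A,C_i)$ are unobservable. If one pair $(A,C_i)$ is observable, then $T_{i}=T_{io}$ and $\|\tilde x\|=\|\tilde x_o^{(i)}\|$ directly. In either case $\|\tilde x(t)\|\le\alpha(z_1(t)+z_2(t))\to0$, which proves \eqref{e:xtilde_to_zero_main}.
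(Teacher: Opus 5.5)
Your outline follows the paper's proof step for step: the diagonal $F_k$ gives $\|Z_k\|\le\delta^k\|Z_0\|$, a per-cycle factor carries this to all $t$, and \Cref{t:xtilde_bound} recovers $\tilde x$. Your refinements ($\max(\mu^{(i)},0)$ in the intra-cycle factor, and the case of an individually observable channel) are real improvements. The gap is in the step everything rests on: your Step~2 claim that \eqref{e:xo1_time_development}--\eqref{e:xo2_time_development} give $\|\tilde x_o^{(i)}(t)\|\le z_i(t)$. The paper's proof uses the same claim through \eqref{e:composite_bounds}, but the claim is false in general. On $[t_{2k+1},t_{2k+2})$ the full error obeys $\dot{\tilde x}=(A-L_{2,k}C_2)\tilde x$, and $T_{1o}^\top A=A_o^{(1)}T_{1o}^\top$, so
\[
\dot{\tilde x}_o^{(1)}=A_o^{(1)}\tilde x_o^{(1)}-T_{1o}^\top L_{2,k}C_2\,\tilde x .
\]
The inactive channel's observable coordinates therefore also receive the active channel's output injection. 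Since $L_{2,k}=T_{2o}L_{o,k}^{(2)}$, this term is generically nonzero unless $T_{1o}^\top T_{2o}=0$. Moreover, $T_{1o}^\top T_{2o}\neq0$ is forced whenever $r_1+r_2>n$, as in the camera example, where both channels see the camera. Hence the diagonal recursion $Z_{k+1}=F_kZ_k$ does not bound the true observable errors.

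The conclusion itself can fail under the stated hypotheses. Let $e_1,e_2$ be the standard basis vectors and take $n=2$, $A=\lambda I$ with $\lambda>0$, $C_1=[0\ \ 1]$, $C_2=[-\sin\epsilon\ \ \cos\epsilon]$, $0<\epsilon<\pi/2$. Then (A1) holds and $\mu^{(1)}=\mu^{(2)}=\lambda$. The paper's construction gives $L_1=\ell C_1^\top$, $L_2=\ell C_2^\top$ and $\Lambda_o^{(i)}=\lambda-\ell$, so $\kappa^{(i)}=1$ for any $\eta^{(i)}<\ell-\lambda$. With $s_k=\tfrac12$ and $\tau_k=2$, conditions \eqref{e:eta1_bound}--\eqref{e:eta2_bound} reduce to $\eta^{(i)}\ge\lambda+\ln(1/\delta_i)$, which is met by $\lambda=1$, $\ell=3$, $\eta^{(i)}=1.9$, $\delta_i=\tfrac12$. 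The cycle map is $M=(a\,vv^\top+b\,C_2^\top C_2)(a\,e_1e_1^\top+b\,e_2e_2^\top)$ with $v=(\cos\epsilon,\sin\epsilon)^\top$, $a=e^{\lambda}$, $b=e^{\lambda-\ell}$. Its trace is $(a^2+b^2)\cos^2\epsilon+2ab\sin^2\epsilon$, so its spectral radius is at least $\tfrac12 e^{2\lambda}\cos^2\epsilon>1$ for $\cos^2\epsilon=0.9$. Thus $\tilde x(t_{2k})=M^k\tilde x(0)$ diverges.

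The mechanism is that the component of $\tilde x$ in the active channel's unobservable subspace grows at rate $\lambda$ however large the gain is. The hand-off to channel~2 only removes the part orthogonal to $\mathcal N(A,C_2)$, and \eqref{e:eta1_bound}--\eqref{e:eta2_bound} carry no information about this geometry. Your argument, like the paper's, goes through when $T_{1o}^\top T_{2o}=0$, i.e.\ when $A$ splits into two blocks each seen by one channel. Otherwise you need a coupled, non-diagonal cycle map whose contraction condition involves the relative position of $\mathcal N(A,C_1)$ and $\mathcal N(A,C_2)$.

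Separately, your Step~3 shows only that bounded $\kappa_k^{(i)}$ can be arranged, not that it follows from the hypotheses; the paper also merely asserts it. Your fallback does not remove the factor $\kappa_k^{(1)}$ at the start of the active phase, so $\sup_k\kappa_k^{(i)}<\infty$ has to be stated as an explicit assumption.
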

%%%%%%%%%%%%%%%%%%%%%%%%%%%%%%%%%%%%%%%%%%%%%%%%%%

%%%%%%%%%%%%%%%%%%%%%%%%%%%%%%%%%%%%%%%%%%%%%%%%%%%%%%%%%%%%%%%%%%%%%%%%%%%%%%%%%%%%%%%%%%%%%%%%%%%%%%%%%%%%%%%%%%%%%%%%%%%%%%%%%%%%%%%%%

The following proof establishes \Cref{t:main}. The argument proceeds in two stages.
First, the envelope recursions derived in \Cref{s:Switch_Obs} yield a cycle-wise contraction for the sampled bound vector $Z_k$.
Second, the within-cycle envelope evolutions \eqref{e:z1_evolution}--\eqref{e:z2_evolution} propagate the decay from the switching instants to all times.
Finally, \Cref{t:xtilde_bound} converts decay of the observable envelopes into decay of the full estimation error.

\smallskip
\begin{proof}[\Cref{t:main}]
Consider the recursion \eqref{e:Fz_system} associated with the envelope trajectories
$z_1(\cdot)$ and $z_2(\cdot)$ defined in \eqref{e:z1_evolution}--\eqref{e:z2_evolution},
where $F_k$ is given by \eqref{e:Fk}.

\smallskip
\textit{Step 1 (use the $\eta$-inequalities to show $f_{11,k}\le \delta$ and $f_{22,k}\le \delta$).}
From \eqref{e:eta1_bound},
\begin{equation}\label{e:eta1_rearrange_main}
\eta_k^{(1)} s_k\tau_k
\ge
\mu^{(1)}(1-s_k)\tau_k
+
\ln\!\left(\frac{\kappa_k^{(1)}}{\delta_1}\right).
\end{equation}
Rearranging gives
\begin{equation}\label{e:beta1_bound_main}
\big(\mu^{(1)}(1-s_k)-\eta_k^{(1)} s_k\big)\tau_k
\le
-\ln\!\left(\frac{\kappa_k^{(1)}}{\delta_1}\right).
\end{equation}
Exponentiating \eqref{e:beta1_bound_main} and multiplying by $\kappa_k^{(1)}$ yields
\begin{equation}\label{e:f11_le_delta_main}
f_{11,k}
=
\kappa_k^{(1)}\exp\!\Big(\big(\mu^{(1)}(1-s_k)-\eta_k^{(1)} s_k\big)\tau_k\Big)
\le
\delta_1.
\end{equation}
Similarly, from \eqref{e:eta2_bound},
\begin{equation}\label{e:eta2_rearrange_main}
\eta_k^{(2)} (1-s_k)\tau_k
\ge
\mu^{(2)} s_k\tau_k
+
\ln\!\left(\frac{\kappa_k^{(2)}}{\delta_2}\right),
\end{equation}
which implies
\begin{equation}\label{e:f22_le_delta_main}
f_{22,k}
=
\kappa_k^{(2)}\exp\!\Big(\big(\mu^{(2)} s_k-\eta_k^{(2)} (1-s_k)\big)\tau_k\Big)
\le
\delta_2.
\end{equation}

%%%%%%%%%%%%%%%%%%%%%%%%%%%%%%%%%%%%%%%%%%%%%%%%%%%%%%%%%%%%%%%%%%%%%%%%%%%%%%%%%%%%%%%%%%%%%%%%%%%%

\textit{Step 2 (cycle-wise contraction of $Z_k$).}
From Step 1 we have
$f_{11,k}\le \delta_1$ and $f_{22,k}\le \delta_2$.
Define
\begin{equation}\label{e:delta_def_main}
\delta \coloneqq \max\{\delta_1,\delta_2\}\in(0,1).
\end{equation}
Since $F_k$ is diagonal with nonnegative diagonal entries, it follows that
\begin{equation}\label{e:Fk_norm_bound_main}
\|F_k\|
=
\max\{f_{11,k},f_{22,k}\}
\le
\delta,
\qquad \forall k\in\mathbb Z_{\ge 0}.
\end{equation}
Therefore,
\begin{equation}\label{e:Z_step_contraction_main_clean}
\|Z_{k+1}\|
=
\|F_k Z_k\|
\le
\|F_k\|\,\|Z_k\|
\le
\delta\,\|Z_k\|,
\qquad \forall k\in\mathbb Z_{\ge 0}.
\end{equation}
Iterating \eqref{e:Z_step_contraction_main_clean} gives
\begin{equation}\label{e:Zk_to_zero_main}
\|Z_k\|
\le
\delta^k \|Z_0\|
\ \longrightarrow\ 0
\quad \text{as}\quad k\to\infty.
\end{equation}
Since
$Z_k=\begin{bmatrix}z_1(t_{2k}) & z_2(t_{2k})\end{bmatrix}^\top$,
\eqref{e:Zk_to_zero_main} implies
$z_1(t_{2k})\to 0$ and
$z_2(t_{2k})\to 0$ as $k\to\infty$.

%%%%%%%%%%%%%%%%%%%%%%%%%%%%%%%%%%%

%%%%%%%%%%%%%%%%%%%%%%%%%%%%%%%
\smallskip
\textit{Step 3 (from $z_i(t_{2k})\to 0$ to $z_i(t)\to 0$).}
Fix $k\in\mathbb Z_{\ge 0}$ and let $t\in[t_{2k},t_{2k+2})$.
From \eqref{e:z1_evolution}, if $t\in[t_{2k},t_{2k+1})$ then
\[
z_1(t)
=
\kappa_{k}^{(1)} e^{-\eta_{k}^{(1)}(t - t_{2k})} z_1(t_{2k})
\le
\kappa_{k}^{(1)} z_1(t_{2k}),
\]
and if $t\in[t_{2k+1},t_{2k+2})$ then $t-t_{2k+1}\le t_{2k+2}-t_{2k+1}\le \tau_k$, so
\[
z_1(t)
=
e^{\mu^{(1)}(t - t_{2k+1})} z_1(t_{2k+1})
\le
e^{\mu^{(1)}\tau_k}\, z_1(t_{2k+1}).
\]
Moreover, evaluating the first branch of \eqref{e:z1_evolution} at $t=t_{2k+1}$ gives
$
z_1(t_{2k+1})
=
\kappa_{k}^{(1)} e^{-\eta_{k}^{(1)}(t_{2k+1}-t_{2k})} z_1(t_{2k})
\le
\kappa_{k}^{(1)} z_1(t_{2k}),
$
and therefore, for all $t\in[t_{2k},t_{2k+2})$,
\begin{equation}\label{e:z1_cycle_gamma_bound}
z_1(t)\le \gamma_{1,k}\, z_1(t_{2k}),
\qquad
\gamma_{1,k} \coloneqq \kappa_{k}^{(1)} e^{\mu^{(1)}\tau_k}.
\end{equation}

Similarly, from \eqref{e:z2_evolution}, for all $t\in[t_{2k},t_{2k+2})$,
\begin{equation}\label{e:z2_cycle_gamma_bound}
z_2(t)\le \gamma_{2,k}\, z_2(t_{2k}),
\qquad
\gamma_{2,k}\coloneqq \kappa_{k}^{(2)} e^{\mu^{(2)}\tau_k}.
\end{equation}

Under \textbf{(A2)} we have $ \underline{\tau}<\tau_k\le \bar\tau$, and the design yields bounded
$\{\kappa_k^{(1)}\}$ and $\{\kappa_k^{(2)}\}$.
Hence there exist finite constants $\bar\gamma_1,\bar\gamma_2$ such that
\begin{equation}\label{e:gamma_bar_bounds}
\gamma_{1,k}\le \bar\gamma_1,
\qquad
\gamma_{2,k}\le \bar\gamma_2,
\qquad \forall k\in\mathbb Z_{\ge 0}.
\end{equation}
Therefore, \eqref{e:z1_cycle_gamma_bound}--\eqref{e:gamma_bar_bounds} together with
$z_1(t_{2k})\to 0$ imply $z_1(t)\to 0$ as $t\to\infty$.
Likewise, \eqref{e:z2_cycle_gamma_bound}--\eqref{e:gamma_bar_bounds} and $z_2(t_{2k})\to 0$
imply $z_2(t)\to 0$ as $t\to\infty$.

\smallskip
\textit{Step 4 (convergence of $\|\tilde x(t)\|$).}
By \Cref{t:xtilde_bound}, there exists $\alpha\ge 1$ such that
\begin{equation}\label{e:xtilde_alpha_z_main}
\begin{aligned}
\|\tilde x(t)\|
&\le
\alpha\Big(\|\tilde x_o^{(1)}(t)\|+\|\tilde x_o^{(2)}(t)\|\Big)\\
&\le
\alpha\big(z_1(t)+z_2(t)\big),
\qquad \forall t\ge 0 .
\end{aligned}
\end{equation}

Since Step~3 shows that $z_1(t)\to 0$ and $z_2(t)\to 0$ as $t\to\infty$, we have
$z_1(t)+z_2(t)\to 0$.
Therefore, \eqref{e:xtilde_alpha_z_main} implies $\|\tilde x(t)\|\to 0$ as $t\to\infty$.
\end{proof}

%%%%%%%%%%%%%%%%%%%%%%%%%%%%%%%%%%%%%%%%%%%%%%%%%%%%%%%%%%%%%%%%%%%%%%%%%%%%%%%%%%%
\smallskip
The following lemma indicates that the hypotheses of \Cref{t:main} are not restrictive.

\smallskip
\begin{lemma}[Feasibility of the design inequalities]\label{t:feasibility_eta}
For each cycle $k$, the inequalities
\eqref{e:eta1_bound} and \eqref{e:eta2_bound} are feasible.
\end{lemma}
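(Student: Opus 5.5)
The plan is to reduce \eqref{e:eta1_bound} (and symmetrically \eqref{e:eta2_bound}) to a scalar inequality of the form
\begin{equation*}
\eta \ \ge\ a_k + b_k \ln\!\big(\kappa(\eta)/\delta_1\big),
\qquad a_k\coloneqq \frac{\mu^{(1)}(1-s_k)}{s_k},\quad b_k\coloneqq \frac{1}{s_k\tau_k},
\end{equation*}
where $a_k,b_k$ are finite because $s_k\in(0,1)$ and $\tau_k\ge\underline\tau>0$. The difficulty is that $\kappa$ is not a fixed constant: it is produced by \Cref{t:observer-euclidean-bound} together with the gain $L_{o,k}^{(1)}$, and in general $\kappa$ grows as the prescribed decay rate $\eta$ grows. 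Feasibility therefore follows if I exhibit a \emph{one-parameter family} of gains for which the achieved decay rate grows linearly in the parameter while $\ln\kappa$ grows only sublinearly. The linear term then dominates the right-hand side for large values of the parameter.

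\textbf{Step 1: construct the family of gains.} Since $(A_o^{(1)},C_o^{(1)})$ is observable, I would use a fixed invertible change of coordinates $P$ that brings it to observer canonical form. In that form, the closed-loop matrix $A_o^{(1)}-LC_o^{(1)}$ is $P^{-1}NP$, where $N$ is a companion matrix whose characteristic coefficients can be assigned freely through $L$.

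For a fixed Hurwitz monic polynomial $q$ of degree $r_1$ with companion matrix $M_0$, I assign the characteristic polynomial $\gamma^{r_1}q(s/\gamma)$ for a parameter $\gamma\ge1$. The resulting companion matrix satisfies $N_\gamma=\gamma D_\gamma M_0 D_\gamma^{-1}$ with $D_\gamma=\operatorname{diag}(1,\gamma,\dots,\gamma^{r_1-1})$, up to the ordering convention of the canonical form. Consequently
\begin{equation*}
e^{\Lambda_\gamma t}=P^{-1}D_\gamma e^{\gamma M_0 t}D_\gamma^{-1}P .
\end{equation*}
Applying \Cref{t:uniform-euclidean-bound} to $M_0$ with some $\eta_0<-\alpha(M_0)$ gives $\|e^{M_0 t}\|\le\kappa_0e^{-\eta_0 t}$. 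Hence
\begin{equation*}
\|e^{\Lambda_\gamma t}\|\le \kappa(\gamma)e^{-\gamma\eta_0 t},
\qquad
\kappa(\gamma)\coloneqq \|P\|\,\|P^{-1}\|\,\kappa_0\,\gamma^{r_1-1}.
\end{equation*}
Setting $\eta=\gamma\eta_0$ yields $\ln\kappa=c_0+(r_1-1)\ln(\eta/\eta_0)=O(\ln\eta)$.

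\textbf{Step 2: conclude feasibility.} The function $g(\eta)\coloneqq\eta-a_k-b_k\ln(\kappa(\eta)/\delta_1)$ is $\eta$ minus a constant minus $O(\ln\eta)$. It therefore tends to $+\infty$, so some $\eta_k^{(1)}$ (equivalently some $\gamma$) satisfies \eqref{e:eta1_bound}. The inequality \eqref{e:eta2_bound} is handled identically, using $(1-s_k)\in(0,1)$ and the pair $(A_o^{(2)},C_o^{(2)})$. If some $r_i=n$, the same argument applies with $T_i=I$.

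As a remark, since $a_k,b_k$ are uniformly bounded once $s_k$ stays away from $0$ and $1$ (which (A2) guarantees, because $s_k\tau_k>\underline h$ and $\tau_k<\bar\tau$), a single $\gamma$ works for all $k$. This also supplies the boundedness of $\{\kappa_k^{(i)}\}$ used in Step 3 of the proof of \Cref{t:main}.

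\textbf{Main obstacle.} The key step is Step 1: showing that the transient constant grows only polynomially in the achieved rate. A naive appeal to \Cref{t:observer-euclidean-bound} gives no control on how $\kappa$ depends on $\eta$. I would verify carefully the scaling identity $N_\gamma=\gamma D_\gamma M_0D_\gamma^{-1}$ for the chosen canonical-form convention; if the convention differs, the scaling matrix is $D_\gamma^{-1}$ instead, and the polynomial bound is unchanged.
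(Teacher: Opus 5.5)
Your argument is correct and has the same skeleton as the paper's proof: build a one-parameter family of gains whose transient constant grows only polynomially in the achieved rate, then let the linear term $\eta$ dominate $b_k\ln\kappa$. The paper obtains the polynomial prefactor from Lemma~A.2 and finishes with Lemma~A.1. Lemma~A.2 places the poles at $-\eta,-2\eta,\dots,-n\eta$, bounds $e^{A_\eta t}$ through the spectral projectors $P_j(\eta)$, and asserts that their entries are polynomial in $\eta$.

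Because $\prod_j(s+j\eta)=\eta^n q(s/\eta)$ with $q(s)=\prod_j(s+j)$, the paper's family is a special case of your homogeneous family. The difference is that you use the similarity $N_\gamma=\gamma D_\gamma M_0 D_\gamma^{-1}$ directly instead of projectors. This gives an explicit constant $\kappa(\gamma)=\|P\|\,\|P^{-1}\|\,\kappa_0\gamma^{r_1-1}$ of known degree. It needs no distinct-eigenvalue choice and avoids the informal ``entries are polynomial'' step of Lemma~A.2.

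Your closing remark also proves something the paper only asserts. Assumption (A2) gives $s_k\tau_k>\underline h$ and $(1-s_k)\tau_k>\underline h$ with $\tau_k<\bar\tau$, so $s_k\in(\underline h/\bar\tau,\,1-\underline h/\bar\tau)$. Hence $a_k$ and $b_k$ are uniformly bounded, a single $\gamma$ serves every cycle, and $\{\kappa_k^{(i)}\}$ is bounded. That is exactly what Step~3 of the proof of \Cref{t:main} uses without justification, whereas the paper's lemma only yields cycle-by-cycle feasibility.

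One step needs completing. A single companion (observer canonical) form exists only for single-output pairs, but $C_o^{(i)}$ generally has $p_i>1$ rows. In the paper's own example $p_i=4$, and $A_o^{(i)}$ is not even cyclic: it is similar to $\mathrm{blkdiag}(A_c,A_i)$, where the eigenvalue $0$ has geometric multiplicity greater than one. So no fixed combination $w^\top C_o^{(i)}$ is observable, and your Step~1 as written does not apply.

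Insert a $\gamma$-independent reduction first. By the dual of Heymann's lemma there exist $K$ and $w$ such that $(A_o^{(i)}+KC_o^{(i)},\,w^\top C_o^{(i)})$ is observable. Writing $L=-K+\ell w^\top$ gives $A_o^{(i)}-LC_o^{(i)}=(A_o^{(i)}+KC_o^{(i)})-\ell\,w^\top C_o^{(i)}$. This is a single-output problem with a fixed transformation $P$, and your scaling argument then goes through verbatim. Alternatively, use the block Luenberger canonical form and scale each block separately.

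The paper's Lemma~A.2 silently relies on the same multi-output pole-placement fact. So this is a missing completion in your argument, not a defect in the strategy.
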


\begin{proof}
See the Appendix.
\end{proof}

\section{LMI-based Luenberger Gain Design}
\label{s:LMI_Gain}

This section presents a systematic procedure to synthesize the Luenberger gains
$L_o^{(1)}$ and $L_o^{(2)}$ to satisfy hypotheses \eqref{e:eta1_bound} and \eqref{e:eta2_bound} of \Cref{t:main} 
Each switching cycle is characterized by a pair $(\tau_k,s_k)$ as in \eqref{e:tau_k} and \eqref{e:s_k}.
For each distinct pair $(\tau,s)$, we synthesize two observable gains
$L_o^{(1)}(\tau,s)$ and $L_o^{(2)}(\tau,s)$ (one per output channel), which can be reused in any
cycle satisfying $(\tau_k,s_k)=(\tau,s)$.
These gains are then embedded into the full-state coordinates via
\eqref{e:L_transformed}--\eqref{e:L_original} to obtain the channel-dependent
gain pair $(L_{1,k},L_{2,k})$.

Recall that for each output channel $i\in\{1,2\}$, the observable estimation error evolves according to
\begin{equation}
\dot{\tilde{x}}_o^{(i)}(t)
=
\big(A_o^{(i)}-L_o^{(i)}C_o^{(i)}\big)\tilde{x}_o^{(i)}(t),
\qquad t\ge t_0,
\end{equation}
and we define the corresponding closed-loop matrix
\begin{equation}
\Lambda^{(i)}
\coloneqq
A_o^{(i)}-L_o^{(i)}C_o^{(i)}.
\end{equation}
The goal is to design $L_o^{(i)}$ such that $\Lambda^{(i)}$ is Hurwitz with a prescribed decay rate.

In the following , we drop the channel index (i) to simplify notation, since the synthesis steps are identical for $i \in \{1,2\}$.

\subsection{Computation of gains for a given decay rate}

\begin{lemma}[LMI synthesis of observer gain]
\label{t:observer-sdp-design}
Let $(A,C)$ be observable and fix $\eta>0$.
Consider the semidefinite program
\begin{align}
\min_{P,Y,\theta} \quad & \theta \label{e:SDP_1}\\
\text{subject to}\quad
& P=P^\top\succ0, \label{e:SDP_2}\\
& A^\top P+PA-C^\top Y^\top-YC+2\eta P \preceq 0, \label{e:SDP_3}\\
& I \preceq P \preceq \theta I. \label{e:SDP_4}
\end{align}
Let $(P^\star,Y^\star,\theta^\star)$ be an optimal solution and define
\begin{equation}\label{e:Lstar}
L^\star \coloneqq (P^\star)^{-1}Y^\star.
\end{equation}
Then the estimation error satisfies
\begin{equation}\label{e:error_bound}
\|\tilde{x}(t)\|
\le
\kappa^\star\,e^{-\eta (t-t_0)}\|\tilde{x}(t_0)\|,
\qquad t\ge t_0.
\end{equation}
with $\kappa^\star=\sqrt{\theta^\star}$.
\end{lemma}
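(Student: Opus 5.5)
The plan is to show that $V(\tilde x)=\tilde x^\top P^\star\tilde x$ is a Lyapunov function for the closed loop $\dot{\tilde x}=(A-L^\star C)\tilde x$ that decays at rate $2\eta$. The sandwich constraint \eqref{e:SDP_4} then converts this decay into the Euclidean bound \eqref{e:error_bound}.

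Before that, we must check that the SDP is feasible and attains its minimum, so that $(P^\star,Y^\star,\theta^\star)$ exists.

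\emph{Feasibility.} By \Cref{t:shift_observability}, $(A+\eta I,C)$ is observable. By \Cref{t:observer-euclidean-bound} (or pole placement) there is $L_0$ such that $A+\eta I-L_0C$ is Hurwitz. \Cref{t:lyap_Lambda} then gives $P_0\succ0$ with
\[
(A+\eta I-L_0C)^\top P_0+P_0(A+\eta I-L_0C)\prec0.
\]
Scaling $P_0$ so that $P_0\succeq I$ and setting $Y_0=P_0L_0$, $\theta_0=\lambda_{\max}(P_0)$ yields a feasible point.

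\emph{Attainment.} The objective $\theta$ is bounded below by $1$. The feasible set intersected with $\{\theta\le\theta_0\}$ is closed, and it is bounded in $P$ and $\theta$. If $Y$ is unbounded, one can work with the infimum instead. Since the bound we need only uses feasibility of a point together with the value of $\theta$, the argument goes through for any feasible triple, and in particular for the optimal one when it exists.

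\emph{Decay of $V$.} Substituting $Y^\star=P^\star L^\star$, which is well defined because $P^\star\succ0$, the constraint \eqref{e:SDP_3} becomes
\[
(A-L^\star C)^\top P^\star+P^\star(A-L^\star C)\preceq-2\eta P^\star.
\]
Here we use that $YC=P^\star L^\star C$ and $C^\top Y^\top=(P^\star L^\star C)^\top$. Along trajectories this gives $\dot V\le-2\eta V$, and by the comparison lemma
\[
V(\tilde x(t))\le e^{-2\eta(t-t_0)}V(\tilde x(t_0)).
\]

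\emph{Euclidean bound.} Since $I\preceq P^\star\preceq\theta^\star I$, we have $\|\tilde x(t)\|^2\le V(\tilde x(t))$ and $V(\tilde x(t_0))\le\theta^\star\|\tilde x(t_0)\|^2$. Taking square roots gives \eqref{e:error_bound} with $\kappa^\star=\sqrt{\theta^\star}$.

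The main obstacle is the existence of an optimal solution, since the SDP may fail to attain its infimum if $Y$ escapes to infinity. I would handle this by noting that the bound holds verbatim for every feasible triple. So if the paper's ``optimal'' is read as any minimizer, or any feasible point with value $\theta$, the conclusion is unaffected.
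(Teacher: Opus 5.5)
Your proof is correct and follows the paper's argument: you substitute $Y^\star=P^\star L^\star$ to rewrite the LMI as $(A-L^\star C)^\top P^\star+P^\star(A-L^\star C)\preceq-2\eta P^\star$, run the Lyapunov decay of $V=\tilde x^\top P^\star\tilde x$, and convert to a Euclidean bound via $I\preceq P^\star\preceq\theta^\star I$ (the paper phrases this last step as $\mathrm{cond}(P^\star)\le\theta^\star$). Your feasibility construction via \Cref{t:shift_observability} is a correct addition the paper omits, and you rightly note that existence of a minimizer is a hypothesis of the lemma, so the attainment discussion is not needed.
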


%%%%%%%%%%%%%%%%%%%%%%%%%%%%%%%%%%%%%%%%%%%%%%%%%%%%%%%%%%%%%%%%%%%%%%%%%%%%%%%%%%%%%%%%%%%%
\noindent
For each channel $i\in\{1,2\}$, apply \Cref{t:observer-sdp-design}
to the observable pair $(A_o^{(i)},C_o^{(i)})$ with a prescribed rate $\eta^{(i)}>0$.
This yields gains $L_o^{(i)}$ such that
\begin{equation}
\|\tilde{x}_o^{(i)}(t)\|
\le
\kappa^{(i)}
e^{-\eta^{(i)}(t-t_0)}
\|\tilde{x}_o^{(i)}(t_0)\|,
\qquad t\ge t_0,
\end{equation}
for some $\kappa^{(i)}\ge \kappa^{\star(i)}$.
These constants directly enter the switching-cycle bounds derived in
\Cref{s:Switch_Obs}.

%%%%%%%%%%%%%%%%%%%%%%%%%%%%%%%%%%%%%%%%%%%%%%%%%%%%%%%%%%%%%%%%%%%%%

\subsection{Selection of decay rates}

To complete the observer design procedure, we must select gains that satisfy the inequalities in \eqref{e:eta1_bound} and \eqref{e:eta2_bound}. 
Note that the constant $\kappa^{(i)}$ obtained from the SDP in \eqref{e:SDP_1}--\eqref{e:SDP_4} depends on the chosen decay rate $\eta^{(i)}$. 
As a result, $\eta^{(i)}$ appears on both sides of the inequality, leading to an implicit design condition.

To characterize the minimal admissible decay rates, we define $\eta_{\min}^{(i)}$ as the solution of the fixed-point equations
\begin{equation}\label{e:eta1_fixed}
\eta_{\min}^{(1)}=
\frac{\mu^{(1)}(1-s)}{s}
\;+\;
\frac{1}{s\,\tau}\,
\ln\!\left(\frac{\kappa^{(1)}(\eta_{\min}^{(1)})}{\delta_1}\right),
\end{equation}
and
\begin{equation}\label{e:eta2_fixed}
\eta_{\min}^{(2)}=
\frac{\mu^{(2)}\,s}{1-s}
\;+\;
\frac{1}{(1-s)\,\tau}\,
\ln\!\left(\frac{\kappa^{(2)}(\eta_{\min}^{(2)})}{\delta_2}\right).
\end{equation}

By \Cref{t:feasibility_eta}, these fixed-point equations admit solutions.

In practice, the solutions are obtained numerically by evaluating the SDP 
\eqref{e:SDP_1}--\eqref{e:SDP_4} for candidate values of $\eta^{(i)}$, thereby computing 
$\kappa^{(i)}(\eta^{(i)})$ and identifying the smallest $\eta^{(i)}$ that satisfies the above equations.

Any choice $\eta^{(i)} > \eta_{\min}^{(i)}$ satisfies the inequalities 
\eqref{e:eta1_bound} and \eqref{e:eta2_bound}. 
Therefore, for given $(s,\tau)$, the observer design proceeds by first determining $\eta_{\min}^{(i)}$, and then selecting any $\eta^{(i)} > \eta_{\min}^{(i)}$ and solving the SDP \eqref{e:SDP_1}--\eqref{e:SDP_4} to obtain the gain matrices $\{L_{o}^{(i)}\}$.

\section{Numerical Experiments}\label{Sec_Sim}
This section illustrates the performance of the proposed switching observer on a representative mobile sensing scenario involving one camera and two moving agents.

\subsection{System Model and Sensing Architecture}\label{subsec:sim_model}

The considered sensing scenario is illustrated in \Cref{f:Camera_Agents}. 
A single moving camera aims to estimate the states of two moving agents in the plane. 
Due to field-of-view limitations, the camera can observe at most one agent at any given time.

The sensing architecture satisfies the following assumptions

\begin{enumerate}
    \item The camera continuously measures its own planar position $(p_{x,c},p_{y,c})$
    \item At any time, only one agent lies within the field of view, so the camera measures the planar position of exactly one agent
    \item The camera allocates its sensing time between the two agents, resulting in switching cycles characterized by $(\tau_k,s_k)$
\end{enumerate}

For each subsystem $j\in\{c,1,2\}$, let $p_j(t)\in\mathbb{R}^2$ and $v_j(t)\in\mathbb{R}^2$ denote position and velocity, and define the subsystem state as $\chi_j(t)=[\,p_j^\top(t)\; v_j^\top(t)\,]^\top\in\mathbb{R}^4$. 
The overall state is $x(t)=[\,\chi_c^\top(t)\; \chi_1^\top(t)\; \chi_2^\top(t)\,]^\top\in\mathbb{R}^{12}$.

Each subsystem is modeled as a double integrator with velocity matrix $D_j\in\mathbb{R}^{2\times2}$.
The overall dynamics are block diagonal
\begin{equation}\label{e:sim_A_blkdiag}
\dot{x}(t)=Ax(t), 
\qquad 
A=\mathrm{blkdiag}(A_c,A_1,A_2),
\end{equation}
where
\begin{equation}\label{e:sim_Aj}
A_j=
\begin{bmatrix}
0_{2\times 2} & I_2 \\
0_{2\times 2} & D_j
\end{bmatrix},
\qquad j\in\{c,1,2\}.
\end{equation}

At any time, the camera measures its own position and the position of exactly one agent. 
When channel $i\in\{1,2\}$ is active, the output is $y_i(t)=C_i x(t)$, where $y_i(t)=[\,p_c^\top(t)\; p_i^\top(t)\,]^\top\in\mathbb{R}^4$ and $C_i\in\mathbb{R}^{4\times 12}$ selects the corresponding position components.

%%%%%%%%%%%%%%%%%%%%%%%%%%%%%%%%%%

Only one of $y_1(t)$ or $y_2(t)$ is available at a time.
Individually, $(A,C_1)$ and $(A,C_2)$ are unobservable and not detectable, as
\begin{equation*}
\operatorname{rank}\big(\mathcal O(A,C_1)\big)=\operatorname{rank}\big(\mathcal O(A,C_2)\big)=8<12,
\end{equation*}
and the corresponding unobservable subspaces contain unstable modes.
In contrast, when both channels are available, the aggregated pair becomes observable:
\begin{equation*}
\operatorname{rank}\big(\mathcal O(A,C)\big)=12, 
\qquad 
C=\begin{bmatrix}C_1\\ C_2\end{bmatrix}.
\end{equation*}
This corresponds to the intermittent-output setting addressed in the present paper.

% Only one of $y_1(t)$ or $y_2(t)$ is available at a time.
% Individually, $(A,C_1)$ and $(A,C_2)$ are unobservable and not detectable, whereas access to both channels renders the overall pair observable. 
% This corresponds to the intermittent-output setting addressed in the present paper.

This simulation setup satisfies the structural assumptions required by \Cref{t:main}. 
The switching schedule $(\tau_k,s_k)$ is selected by the camera, and channel-dependent observer gains are synthesized using the LMI procedure of \Cref{s:LMI_Gain}. 
The next subsection specifies the numerical parameters and verifies that the resulting cycle multipliers satisfy the sufficient contraction conditions.
%%%%%%%%%%%%%%%%%%%%%%%%%%%%%%%%%%%%%%%%%%%%%%%%%%%%%%%%%%%%%%%%%%%%%%%%%%%%%%%%%%%%%%%%%%%%%%%%%%%%%%%%%%%%%%%%%%%%%%%%%%%%%%%%%%%%%%%%%%%%%%%%%%%
\subsection{Simulation Parameters}\label{subsec:sim_params}

In \eqref{e:sim_Aj}, we select diagonal velocity matrices $D_j=\mathrm{diag}(d_{x,j},d_{y,j})$ for $j\in\{c,1,2\}$. 
The camera dynamics use $D_c=\mathrm{diag}(0,0)$, which yields marginal double-integrator behavior. 
The agent dynamics use $D_1=\mathrm{diag}(0,0.2)$ and $D_2=\mathrm{diag}(0,0.1)$, so that each agent contains one unstable velocity mode corresponding to the positive diagonal entry. 

Two switching scenarios are considered. 
In the periodic case, the cycle parameters are constant with $(\tau_k,s_k)\equiv(2\,\mathrm{s},0.6)$ for all $k$. 
In the nonperiodic case, the cycle parameters vary over successive cycles according to $(\tau_k,s_k)\in\{(4\,\mathrm{s},0.5),(3\,\mathrm{s},0.6),(3\,\mathrm{s},0.6),(2\,\mathrm{s},0.4)\}$. 
For each selected pair $(\tau_k,s_k)$, the observer gains $L_1$ and $L_2$ are synthesized using the LMI procedure of \Cref{s:LMI_Gain}.

%%%%%%%%%%%%%%%%%%%%%%%%%%%%%%%%%%%%%%%%%%%%%%

%%%%%%%%%%%%%%%%%%%%%%%%%%%%%%%%%%%%%%%%%%%%%%%%%

\subsection{Observer Construction and Gain Synthesis}\label{subsec:sim_gain}

For each channel $i\in\{1,2\}$, an orthogonal observability decomposition is computed using an orthogonal similarity transformation $T_i$. 
This yields the observable subsystem matrix $A_o^{(i)}$ corresponding to the output channel $y_i$ and the associated reduced output matrix $C_o^{(i)}$ used in the observer design. 
The observer is implemented according to the switching structure described in \Cref{s:Switch_Obs}

The constants $\mu^{(i)}$ are computed as the logarithmic norms of the matrices $A_o^{(i)}$, as defined in \eqref{e:mu_o_i_def}. 
For the present system, this yields $\mu^{(1)}=0.6099$ and $\mu^{(2)}=0.5525$.

The decay rates are selected based on the fixed-point conditions \eqref{e:eta1_fixed}--\eqref{e:eta2_fixed}, which determine the minimal admissible values $\eta_{\min}^{(i)}$ ensuring feasibility of the design conditions.

For each cycle, a gain pair $(L_{1,k},L_{2,k})$ is designed offline for the selected cycle parameters $(\tau_k,s_k)$ using the LMI procedure of \Cref{s:LMI_Gain}. 
The resulting gain pair depends only on $(\tau_k,s_k)$ and can therefore be reused whenever the same cycle parameters occur.

\medskip

\noindent\textit{Periodic case:}
When $(\tau_k,s_k)\equiv(2\,\mathrm{s},0.6)$, solving \eqref{e:eta1_fixed}--\eqref{e:eta2_fixed} yields 
$(\eta_{\min}^{(1)},\eta_{\min}^{(2)})=(1.5,\,3.24)$. 
We select $(\eta_k^{(1)},\eta_k^{(2)})\equiv(1.6,3.5)$, which satisfy $\eta_k^{(i)}>\eta_{\min}^{(i)}$, and compute the corresponding gains once offline.

\medskip

\noindent\textit{Nonperiodic case:}
For $(\tau_k,s_k)\in\{(4\,\mathrm{s},0.5),(3\,\mathrm{s},0.6),(2\,\mathrm{s},0.4)\}$, the fixed-point equations are solved for each distinct configuration to obtain $(\eta_{\min}^{(1)},\eta_{\min}^{(2)})$. 
The selected decay rates satisfy $\eta^{(i)}>\eta_{\min}^{(i)}$, and the corresponding gains are computed via the LMI procedure and reused across cycles with identical parameters. 
The values are summarized in   \Cref{T:eta_values}.

%%%%%%%%%%%%%%%%%%%%%%%%%%%%%%%%%%%%%%%%%%

\begin{table}[h]
\centering
\caption{Minimal and selected decay rates for distinct cycle configurations}
\label{T:eta_values}
\begin{tabular}{c|c|c}
\hline
$(\tau, s)$ & $(\eta_{\min}^{(1)}, \eta_{\min}^{(2)})$ & $(\eta^{(1)},\eta^{(2)})$ \\
\hline
$(4\,\mathrm{s},\,0.5)$ & $(1.18,1.05)$ & $(1.2,\,1.1)$ \\
$(3\,\mathrm{s},\,0.6)$ & $(0.96,2.12)$ & $(1,\,2.2)$ \\
$(2\,\mathrm{s},\,0.4)$ & $(3.43, 1.35)$ & $(3.5,\,1.5)$ \\
\hline
\end{tabular}
\end{table}

%%%%%%%%%%%%%%%%%%%%%%%%%%%%%%%%%%%%%%
%%%%%%%%%%%%%%%%%%%% Figures %%%%%%%%%%%%%%%%%%%%%%%%%%%%%%%%%%%%%%%%%

\begin{figure}[t]
\centering
\includegraphics[width=0.40\columnwidth]{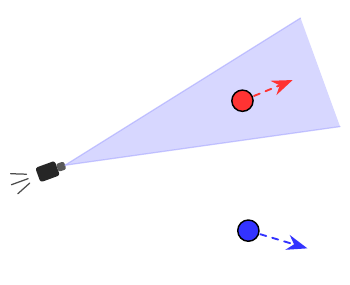}
\caption{Mobile camera sensing architecture with two output channels. Due to field-of-view constraints, the camera observes only one agent at a time}
\label{f:Camera_Agents}
\end{figure}

%%%%%%%%%%%%%%%%%%%%%%%%%%%%%%%%%%%%%%%%%%%%%%%%%%%%%%%%%%%%%%%%%%%%%%%%%%%%%%%%%%%%%%%%%%%%%%%%%%%%%%%%%%%%%%%%%%%%%%%%%%%%%%%%%%%%%%%%%%%%%%%%%%%%%%%%%%%%%%%%%%%%%%%%%%%%%%%%%%%%%%%%%%%%%%%%%%%%%%%%%%%%%
%%%%%%%%%%%%%%%%%%%%%%%%%%%%%%%%%%%%%%%%%%%%%%%%%%%%%%%%%%%%%%%%%%%%%%%%%%%%%%%%%%%%%%%%%%%%%%%%%%%%%%%%%%%%%%%%%%%%%%%%%%%%%%%%%%%%%%%%%%%%%%%%%%%%%%%%%%
%%%%%%%%%%%%%%%%%%%%%%%%%%%%%%%%%%%%%%%%%%%%%%%%%%%%%%%%%%%%%%%%%%%%%%%%%%%%%%%%%%%%%%%%%%%%%%%%%%%%%%%%%%%%%%%%%%%%%%%%%%%%%%%%%%%%%%%%%%%%%%%%%%%%%%%%%%
\subsection{Estimation Error Results}\label{subsec:sim_results}

We evaluate the estimation performance of the proposed switching observer under the periodic and nonperiodic sensing schedules.

For both scenarios, the observer is initialized with a perturbed estimate,
\[
\hat{x}(0)=x(0)+0.1\,\mathrm{randn}(\mathrm{size}(x(0))),
\]
so that the initial estimation error $\tilde{x}(0)=\hat{x}(0)-x(0)$ is nonzero.

In the periodic switching scenario, \Cref{f:error_periodic} illustrates the evolution of the estimation errors. 
The observable-coordinate error norms $\|\tilde{x}_o^{(1)}(t)\|$ and $\|\tilde{x}_o^{(2)}(t)\|$ both converge to zero, and consequently the overall estimation error norm $\|\tilde{x}(t)\|$ in the original coordinates also converges to zero. 
The switching signal $\rho(t)$ is included for reference.

In the nonperiodic switching scenario, \Cref{f:error_nonperiodic} illustrates the evolution of the estimation errors. 
The observable-coordinate error norms $\|\tilde{x}_o^{(1)}(t)\|$ and $\|\tilde{x}_o^{(2)}(t)\|$ both converge to zero, and consequently the overall estimation error norm $\|\tilde{x}(t)\|$ in the original coordinates also converges to zero despite the varying cycle parameters. 
The switching signal $\rho(t)$ is included for reference.

These results are consistent with the convergence claim of \Cref{t:main} for both periodic and nonperiodic schedules, even though each individual channel pair $(A,C_i)$ is not detectable.

\begin{figure}[t]
\centering
\includegraphics[width=0.8\linewidth]{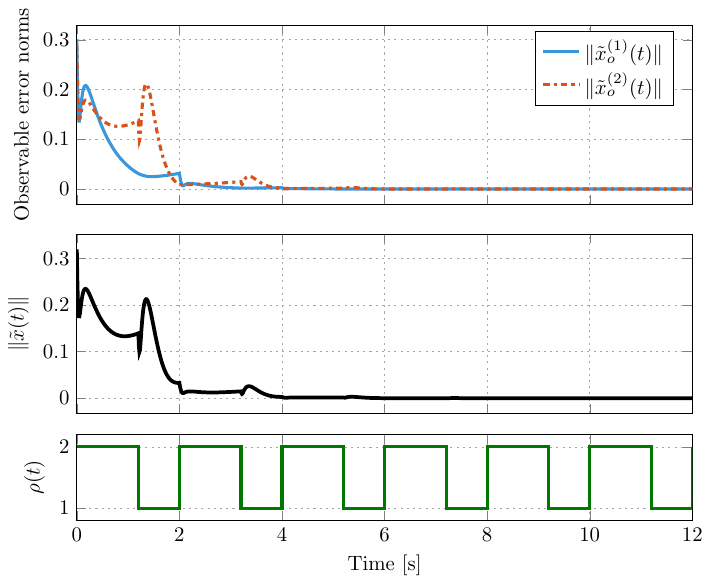}
\caption{Estimation error under nonperiodic switching. 
Top: observable-coordinate error norms $\|\tilde{x}^{(1)}_{o}(t)\|$ and $\|\tilde{x}^{(2)}_{o}(t)\|$. 
Middle: total estimation error $\|\tilde{x}(t)\|$ in the original coordinates. 
Bottom: switching signal $\rho(t)$.}
\label{f:error_periodic}
\end{figure}

\begin{figure}[t]
\centering
\includegraphics[width=0.8\linewidth]{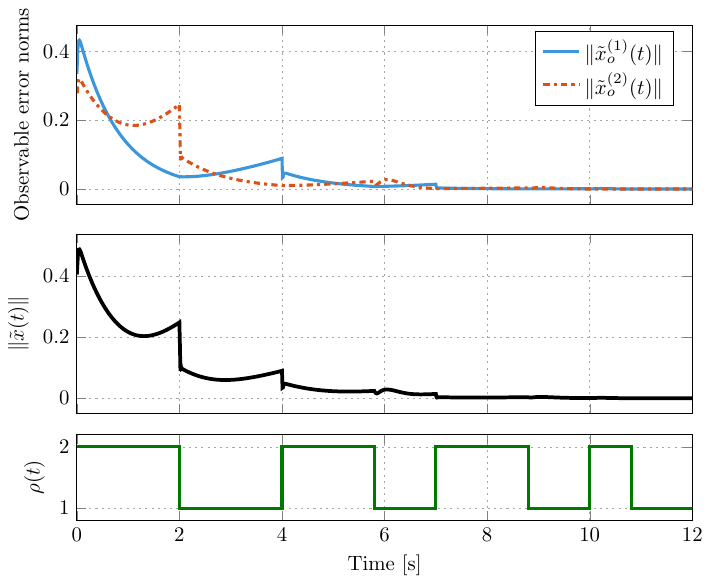}
\caption{Estimation error under nonperiodic switching. 
Top: observable-coordinate error norms $\|\tilde{x}^{(1)}_{o}(t)\|$ and $\|\tilde{x}^{(2)}_{o}(t)\|$. 
Middle: total estimation error $\|\tilde{x}(t)\|$ in the original coordinates. 
Bottom: switching signal $\rho(t)$.}
\label{f:error_nonperiodic}
\end{figure}

%%%%%%%%%%%%%%%%%%%%%%%%%%%%%%%%%%%%%%%%%%%%%%%%%%%
 % \vspace{-1em}

\subsection{Discussion}
The numerical results validate the theoretical developments of
\Cref{s:Orthogonal_Decom,s:Main_Result,s:LMI_Gain}    and illustrate how the cycle-wise contraction
mechanism translates into continuous-time decay of the full
estimation error. In particular, the simulations confirm that
although the observable estimation-error components may exhibit
temporary growth within a switching cycle, the net effect over each
cycle is strictly contractive under the conditions of \Cref{t:main}.

Recall that the sampled envelope vector
\[
Z_k =
\begin{bmatrix}
z_1(t_{2k})\\
z_2(t_{2k})
\end{bmatrix}
\]
satisfies the linear recursion
\[
Z_{k+1} = F_k Z_k,
\]
where $F_k$ is diagonal with entries
$f_{11,k}$ and $f_{22,k}$ determined by the decay rates
$\eta^{(i)}_k$, the logarithmic norms $\mu^{(i)}$, and the
cycle parameters $(\tau_k,s_k)$. The gain-selection inequalities
(79)--(80) guarantee the existence of a constant
$\delta \in (0,1)$ such that
\[
\|Z_{k+1}\| \le \delta \|Z_k\|.
\]
The periodic simulations show that the sampled envelope norms
decay geometrically, which is consistent with the bound
$\|Z_k\| \le \delta^k \|Z_0\|$. In the nonperiodic case,
the slopes of the logarithmic error curves vary from cycle to
cycle, reflecting the dependence of $F_k$ on $(\tau_k,s_k)$;
nevertheless, as long as the contraction inequalities are satisfied
uniformly over all cycles, convergence is preserved.

A characteristic feature observed in the simulations is the
within-cycle transient growth of the observable error component
associated with the inactive channel. When channel $i$ is not
active, its observable component evolves according to the open-loop
matrix $A_o^{(i)}$ and may grow at a rate governed by
$\mu^{(i)}$. However, once the corresponding channel becomes
active, the injected gain produces exponential decay at rate
$\eta^{(i)}_k$, and the design ensures that the net effect over the
full cycle satisfies
\[
\kappa^{(i)}_k
\exp\!\big(
\beta_{i,k}(s_k)\,\tau_k
\big)
\le \delta_i.
\]
The figures clearly
illustrate this alternating growth--decay pattern, while the overall
error norm decreases monotonically at the switching instants.

Finally, the simulations provide a concrete interpretation of the
finite-horizon informativeness condition. Neither $(A,C_1)$ nor
$(A,C_2)$ is individually observable; nevertheless, the combined
effect of accessing both outputs over each switching cycle
accumulates sufficient information to reduce the estimation error.
The geometric bound
\[
\|\tilde{x}(t)\|
\le
\alpha\big(z_1(t)+z_2(t)\big)
\]
confirms that decay of the channel-dependent observable components
implies decay of the full-state error. Thus, the numerical results
demonstrate that per-cycle information accumulation, together with
LMI-based gain synthesis, is sufficient to guarantee asymptotic
convergence even when no single output channel is observable or
detectable on its own.

\Cref{f:eta_min1,f:eta_min2} characterize the minimum admissible decay rates $\eta_{\min}^{(1)}$ and $\eta_{\min}^{(2)}$ as functions of the cycle parameters $(\tau,s)$, thereby defining the admissible design region. 
These plots provide a direct graphical alternative to solving the fixed-point equations \eqref{e:eta1_fixed}--\eqref{e:eta2_fixed}, allowing $(\eta_{\min}^{(1)},\eta_{\min}^{(2)})$ to be read off for any given $(\tau,s)$. 
The curves are shown for representative values $\tau=1,2,3,4,5$ and all $s\in(0,1)$, illustrating the dependence of the required decay rates on the switching schedule.

Selecting $\eta^{(i)}>\eta_{\min}^{(i)}$ strengthens contraction during the active subinterval and typically accelerates the asymptotic decay of $\|\tilde{x}(t)\|$. 
However, \Cref{f:kappa_LMI} shows that the associated  $\kappa^\star$ increases with $\eta$, enlarging the multiplicative prefactor in the error bound. 
Thus, admissible parameters must ensure that the combined cycle condition still yields a contraction factor $\delta<1$ as required by \Cref{t:main}.

The dependence of $\kappa^{(i)}$ on $\eta^{(i)}$ reveals a fundamental tradeoff: larger $\eta^{(i)}$ improves asymptotic decay but increases $\kappa^{(i)}$, thereby amplifying transient growth within each switching cycle. 
Such amplification is an inherent consequence of enforcing faster decay via LMI certificates.

%%%%%%%%%%%%%%%%%%%%%%%%%%%%%%%%%%%%%%%%%%%%%%%%%%%%%%%

\begin{figure}[t]
\centering
\includegraphics[width=0.8\linewidth]{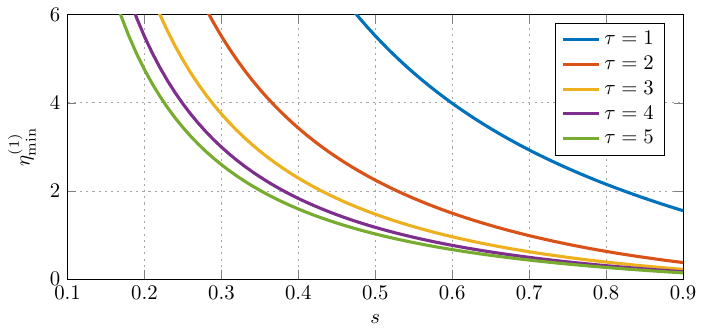}
\caption{Minimum decay rate $\eta_{\min}^{(1)}$ satisfying contraction conditions.}
\label{f:eta_min1}
\end{figure}

\begin{figure}[t]
\centering
\includegraphics[width=0.8\linewidth]{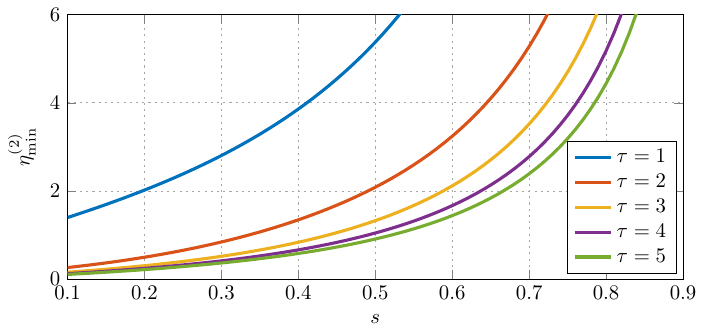}
\caption{Minimum decay rate $\eta_{\min}^{(2)}$ satisfying contraction conditions.}
\label{f:eta_min2}
\end{figure}

\begin{figure}[t]
\centering
\includegraphics[width=0.8\linewidth]{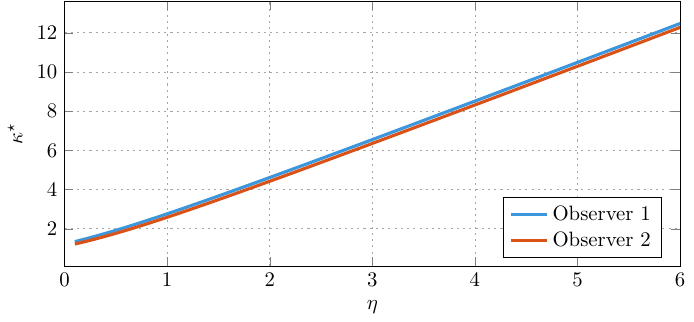}
\caption{Minimum achievable bound constant $\kappa^\star$ as a function of the decay rate $\eta$, obtained via LMI synthesis.}
\label{f:kappa_LMI}
\end{figure}

\FloatBarrier

\section{Conclusion}

In this paper, we developed a switching Luenberger observer
for linear time-invariant systems with intermittently available
outputs under a predefined output-switching rule. Convergence
is achieved by leveraging the observability of the aggregated
output together with uniform bounds on the switching intervals,
without imposing any detectability condition on the individual
channels. We design channel-dependent Luenberger gains via
linear matrix inequalities (LMIs) and prove that the estimation
error converges to zero for both periodic and nonperiodic
schedules characterized by $(\tau_k,s_k)$. The proposed analysis
exploits orthogonal transformations to identify channel-dependent
observable subspaces and to construct norm-based bounds that are
invariant under such coordinate changes.

In contrast to Riccati-based observer designs that require online, time-varying gain updates, the proposed approach enables offline LMI-based synthesis of observer gains for prescribed switching parameters $(\tau_k,s_k)$. This leads to constant gains for periodic schedules and a table-based gain selection mechanism for nonperiodic schedules, thereby reducing online computational complexity.

Several directions for future work arise from this study. First, the framework can be extended to incorporate measurement noise and process disturbances, leading to robustness guarantees and stochastic performance bounds. Second, the proposed methodology can be generalized to systems with more than two output channels, where only subsets of outputs are available at each time or where partial access to multiple outputs occurs simultaneously. Finally, extending the switching observer design to nonlinear systems, possibly via local linearization, contraction analysis, or incremental stability tools, remains an important direction for broadening the applicability of the results.

\newpage

\section{Appendices}

%%%% Auxiliary lemmas
\subsection{Auxiliary Lemmas}
The following auxiliary lemmas are used in the appendix to establish the main results.

\smallskip
\noindent\textbf{Lemma A.1.}\par

Let $c_1>0$ and $c_2>0$. Then the inequality
\begin{equation*}
\eta \;\ge\; c_1 + c_2 \ln \eta
\end{equation*}
admits a solution.

\begin{proof}
Define $f(\eta):=\eta - c_2\ln\eta$. Then
\[
f'(\eta)=1-\frac{c_2}{\eta},
\]
so $f'(\eta)>0$ for all $\eta>c_2$, and hence $f$ is strictly increasing on $(c_2,\infty)$. Moreover, $f(\eta)\to\infty$ as $\eta\to\infty$. Therefore, there exists $\eta>c_2$ such that $f(\eta)\ge c_1$, which proves the claim.
\end{proof}
\smallskip

%%%%%%%%%%%%%%%%%%%%%%%%%%%%%%%%%%%%%%%%%%%%%%%%%%%%%%%
\noindent\textbf{Lemma A.2.} 
(\textit{Polynomial decay bound.})\par
\smallskip
% \label{t:polynomial-prefactor}
Let $(A,C)$ be observable, where $A\in\mathbb{R}^{n\times n}$. Then there exists a family of observer gains $\{L_\eta\}_{\eta>0}$ such that
\begin{equation*}
\operatorname{spec}(A-L_\eta C)
=
\{-\eta,-2\eta,\ldots,-n\eta\}.
\end{equation*}
Moreover, there exists a polynomial $M(\eta)$ such that
\begin{equation*}
\left\|e^{(A-L_\eta C)t}\right\|
\le
M(\eta)e^{-\eta t},
\qquad \forall t\ge 0,\ \eta\ge 1.
\end{equation*}
In particular, for any fixed $t>0$,
\begin{equation*}
M(\eta)e^{-\eta t}
\longrightarrow 0
\quad \text{as } \eta\to\infty.
\end{equation*}

\begin{proof}
Since $(A,C)$ is observable, for each $\eta>0$ there exists a gain $L_\eta$
such that the closed-loop matrix
\begin{equation*}
A_\eta \coloneqq A-L_\eta C
\end{equation*}
has distinct eigenvalues
\begin{equation*}
\operatorname{spec}(A_\eta)=\{-\eta,-2\eta,\ldots,-n\eta\}.
\end{equation*}
Since these eigenvalues are distinct, $A_\eta$ admits the spectral decomposition
\begin{equation*}
e^{A_\eta t}
=
\sum_{j=1}^{n} e^{-j\eta t}P_j(\eta),
\end{equation*}
where the spectral projectors are given by
\begin{equation*}
P_j(\eta)
=
\prod_{\substack{k=1\\ k\neq j}}^{n}
\frac{A_\eta+k\eta I}{(k-j)\eta}.
\end{equation*}
We now bound the projectors. Since $A_\eta$ is obtained by pole placement
from the polynomial
\begin{equation*}
p_\eta(s)=\prod_{j=1}^{n}(s+j\eta),
\end{equation*}
its entries can be chosen polynomial in $\eta$. Hence each factor
\begin{equation*}
A_\eta+k\eta I
\end{equation*}
has entries polynomial in $\eta$, while the denominator contributes only
powers of $\eta^{-1}$. Therefore, for $\eta\ge 1$, there exists a polynomial
$M_j(\eta)$ such that
\begin{equation*}
\|P_j(\eta)\|\le M_j(\eta).
\end{equation*}
Thus, for some polynomial $M(\eta)$,
\begin{equation*}
\sum_{j=1}^{n}\|P_j(\eta)\|
\le
M(\eta).
\end{equation*}

Using the spectral decomposition and the triangle inequality, we obtain
\begin{equation*}
\|e^{A_\eta t}\|
\le
\sum_{j=1}^{n} e^{-j\eta t}\|P_j(\eta)\|
\le
e^{-\eta t}
\sum_{j=1}^{n}\|P_j(\eta)\|.
\end{equation*}
Therefore,
\begin{equation*}
\|e^{A_\eta t}\|
\le
M(\eta)e^{-\eta t},
\qquad \forall t\ge 0.
\end{equation*}

Finally, since $M(\eta)$ is polynomial, for any fixed $t>0$ we have
\begin{equation*}
M(\eta)e^{-\eta t}\to 0
\quad \text{as } \eta\to\infty.
\end{equation*}
\end{proof}

%%%%%%%%%%%%%%%%%%%%%%%%%%%%%%%%%%%%%%%%%%%%%%%%%%%%%%%%%%%%%%%%%%%%%

\vspace{-0.8cm}

\subsection{Proofs of Lemmas from the Main Text}
The proofs of the lemmas stated in the main text are collected here for completeness.

\noindent\textbf{Lemma \ref{t:decomposition}.} 
(\textit{Orthogonal observability decomposition.})\par
\smallskip

Let \((A,C)\) be a linear pair. Then there exists an orthogonal matrix
\begin{equation*}\label{e:T_orthogonal}
T=\begin{bmatrix}T_o & T_u\end{bmatrix},
\qquad
T^\top T=I,
\end{equation*}
such that
\begin{equation*}\label{e:A_obs_decomp}
T^\top A T
=
\begin{bmatrix}
A_o & 0\\
A_{uo} & A_u
\end{bmatrix},
\end{equation*}
and
\begin{equation*}\label{e:C_obs_decomp}
CT=\begin{bmatrix}C_o & 0\end{bmatrix},
\end{equation*}
where \((A_o,C_o)\) is observable.
\begin{proof}
Let \(\mathcal N\) denote the unobservable subspace of the pair \((A,C)\), i.e.,
\begin{equation*}
\mathcal N=\ker \mathcal O.
\end{equation*}
It is standard that \(\mathcal N\) is \(A\)-invariant, that is,
\begin{equation*}
A\mathcal N\subseteq \mathcal N.
\end{equation*}
Let \(\mathcal V=\mathcal N^\perp\). Then
\begin{equation*}
\mathbb R^n=\mathcal V \oplus \mathcal N,
\end{equation*}
where \(\oplus\) denotes the direct sum. In particular, every \(x \in \mathbb R^n\) admits a unique decomposition
\begin{equation*}
x = x_v + x_n, \qquad x_v \in \mathcal V,\;\; x_n \in \mathcal N.
\end{equation*}

Choose an orthonormal basis of \(\mathcal V\) and an orthonormal basis of \(\mathcal N\), and let
\(T_o\) and \(T_u\) be the matrices whose columns are these basis vectors. Defining
\begin{equation*}
T=\begin{bmatrix}T_o & T_u\end{bmatrix},
\end{equation*}
yields an orthogonal matrix.

Since \(\mathcal N\) is \(A\)-invariant, it follows that \(A T_u \subseteq \mathrm{span}(T_u)\). Consequently, in the basis defined by \(T\), the matrix \(T^\top A T\) has the block lower-triangular form
\begin{equation*}
T^\top A T
=
\begin{bmatrix}
A_o & 0\\
A_{uo} & A_u
\end{bmatrix}.
\end{equation*}

Moreover, since \(\mathcal N \subseteq \ker C\), one has
\begin{equation*}
CT_u=0,
\end{equation*}
and therefore
\begin{equation*}
CT=\begin{bmatrix}C_o & 0\end{bmatrix}.
\end{equation*}

Finally, \((A_o,C_o)\) is observable by construction.
\end{proof}

%%%%%%%%%%%%%%%%%%%%%%%%%%%%%%%%%%%%%%%%%%%%%%%%%%%
\noindent\textbf{Lemma \ref{t:shift_observability}.} 
(\textit{Shift invariance of observability.})\par
\smallskip
Let $A \in \mathbb{R}^{n\times n}$ and $C \in \mathbb{R}^{p\times n}$.
If the pair $(A,C)$ is observable, then for any scalar $\eta \in \mathbb{R}$ the shifted pair
$(A+\eta I,\,C)$ is also observable.

\begin{proof}
Define $A_\eta \coloneqq A+\eta I$. The observability matrices of $(A,C)$ and $(A_\eta,C)$ are
\[
\mathcal{O}
=
\begin{bmatrix}
C\\
CA\\
CA^2\\
\vdots\\
CA^{n-1}
\end{bmatrix},
\qquad
\mathcal{O}_\eta
=
\begin{bmatrix}
C\\
CA_\eta\\
CA_\eta^{2}\\
\vdots\\
CA_\eta^{\,n-1}
\end{bmatrix}.
\]
By the binomial formula,
\[
A_\eta^{\,k}=(A+\eta I)^k=\sum_{j=0}^{k}\binom{k}{j}\eta^{\,k-j}A^{\,j},
\]
and hence
\[
CA_\eta^{\,k}=\sum_{j=0}^{k}\binom{k}{j}\eta^{\,k-j}CA^{\,j}.
\]
Therefore, there exists a block lower-triangular matrix $T\in\mathbb{R}^{np\times np}$ such that
\[
\mathcal{O}_\eta = T\mathcal{O},
\]
whose $(k+1,j+1)$ block entry (each block is $p\times p$) is
\[
T_{k+1,j+1}=
\begin{cases}
\binom{k}{j}\eta^{\,k-j}I_p, & j\le k,\\
0, & j>k,
\end{cases}
\qquad k,j\in\{0,1,\dots,n-1\}.
\]
Since $T$ is lower triangular with diagonal blocks $T_{k+1,k+1}=I_p$, it is nonsingular.
Hence,
\[
\operatorname{rank}(\mathcal{O}_\eta)
=
\operatorname{rank}(T\mathcal{O})
=
\operatorname{rank}(\mathcal{O}).
\]
If $(A,C)$ is observable, then $\operatorname{rank}(\mathcal{O})=n$, and thus $\operatorname{rank}(\mathcal{O}_\eta)=n$.
Therefore, $(A+\eta I,C)$ is observable.
\end{proof}
% \qed 
\smallskip
%%%%%%%%%%%%%%%%%%%%%%%%%%%%%%%%%%%%%%%%%%%%%%%%%%%%%%%%%%%%%

\noindent\textbf{Lemma \ref{t:uniform-euclidean-bound}.} 
(\textit{Uniform exponential Euclidean bound.})\par
\smallskip
Let $\Lambda\in\mathbb{R}^{n\times n}$ be Hurwitz. Then for any
$0<\eta<-\alpha(\Lambda)$, where
\begin{equation*}
\alpha(\Lambda)
\coloneqq
\max_j \Re\big(\lambda_j(\Lambda)\big),
\end{equation*}
there exists a constant $\kappa\ge 1$ such that the solution of
$\dot{\tilde{x}}(t)=\Lambda\tilde{x}(t)$ satisfies
\begin{equation*}
\|\tilde{x}(t)\|
\le
\kappa\,e^{-\eta (t-t_0)}\,\|\tilde{x}(t_0)\|,
\qquad t\ge t_0.
\end{equation*}

\begin{proof}
Fix any $0<\eta<-\alpha(\Lambda)$. Then the shifted matrix
$\Lambda_\eta \coloneqq \Lambda+\eta I$ is Hurwitz.
Choose any $Q=Q^\top\succ0$ and apply \Cref{t:lyap_Lambda} to $\Lambda_\eta$.
There exists a unique $P=P^\top\succ0$ such that
\begin{equation*}
\Lambda_\eta^\top P + P\Lambda_\eta = -Q .
\end{equation*}
Expanding $\Lambda_\eta=\Lambda+\eta I$ gives
\begin{equation*}
\Lambda^\top P + P\Lambda = -Q - 2\eta P \preceq -2\eta P .
\end{equation*}

Define the Lyapunov function $V(t) \coloneqq \tilde{x}(t)^\top P\tilde{x}(t)$.
Along $\dot{\tilde{x}}=\Lambda\tilde{x}$,
\begin{equation*}
\begin{aligned}
\dot V(t)
&=
\tilde{x}(t)^\top\big(\Lambda^\top P + P\Lambda\big)\tilde{x}(t) \\
&\le
-2\eta\,\tilde{x}(t)^\top P\tilde{x}(t) \\
&=
-2\eta\,V(t).
\end{aligned}
\end{equation*}
Hence $V(t)\le e^{-2\eta (t-t_0)}V(t_0)$ for all $t\ge t_0$.
Using $\lambda_{\min}(P)\|\tilde{x}\|^2 \le V \le \lambda_{\max}(P)\|\tilde{x}\|^2$ yields
\begin{equation*}
\|\tilde{x}(t)\|
\le
\sqrt{\frac{\lambda_{\max}(P)}{\lambda_{\min}(P)}}\,
e^{-\eta (t-t_0)}\,\|\tilde{x}(t_0)\|,
\qquad t\ge t_0.
\end{equation*}
Therefore the claim holds with
$\kappa=\sqrt{\mathrm{cond}(P)}$, where
$\mathrm{cond}(P) \coloneqq
\dfrac{\lambda_{\max}(P)}{\lambda_{\min}(P)}$
is the condition number of $P$.
\end{proof}
\smallskip
%%%%%%%%%%%%%%%%%%%%%%%%%%%%%%%%%%%%%%%%%%%%%%%%%%%%%%%%%%%%%%%%%

%%%%%%%%%%%%%%%%%%%%%%%%%%%%%%%%%%%%%%%%%%%%%%%%%%%%%%%%%%%%%%%%

\noindent\textbf{Lemma \ref{t:observer-euclidean-bound}.} 
(\textit{Observer gain with prescribed decay rate.})\par
\smallskip
Let $(A,C)$ be observable. Then for any desired $\eta>0$, there exists a gain
$L$ such that $\Lambda \coloneqq A-LC$ is Hurwitz and satisfies
$\alpha(\Lambda)<-\eta$. Consequently, the estimation-error dynamics
$\dot{\tilde{x}}(t)=\Lambda\tilde{x}(t)$ satisfy
\begin{equation*}
\|\tilde{x}(t)\|
\le
\kappa\,e^{-\eta (t-t_0)}\,\|\tilde{x}(t_0)\|,
\qquad t\ge t_0,
\end{equation*}
for some $\kappa\ge1$.

\begin{proof}
Since $(A,C)$ is observable, the eigenvalues of $A-LC$ can be assigned
arbitrarily in the open left half-plane.
Choose $L$ such that $\alpha(A-LC)<-\eta$, and define $\Lambda \coloneqq A-LC$.
Then $\Lambda$ is Hurwitz with $\alpha(\Lambda)<-\eta$.
The bound follows from \Cref{t:uniform-euclidean-bound}.
\end{proof}
\smallskip

%%%%%%%%%%%%%%%%%%%%%%%%%%%%%%%%%%%%%

\noindent\textbf{Lemma \ref{t:norm_preserve_blocks}.} 
(\textit{Block norm preservation.})\par
\smallskip
Let $T\in\mathbb R^{n\times n}$ be an orthogonal matrix and partition it as
$T=[\,T_o\ \ T_u\,]$, where $T_o\in\mathbb R^{n\times r}$ and
$T_u\in\mathbb R^{n\times (n-r)}$. Then, for any estimation-error
components $\tilde x_o\in\mathbb R^{r}$ and $\tilde x_u\in\mathbb R^{n-r}$,
\begin{equation*}\label{e:Norm_Preserve}
\|T_o \tilde x_o\|=\|\tilde x_o\|,
\qquad
\|T_u \tilde x_u\|=\|\tilde x_u\|.
\end{equation*}

\begin{proof}
Since $T$ is orthogonal,
\begin{equation*}
T^\top T = I_n .
\end{equation*}
Writing $T=[\,T_o\ \ T_u\,]$ gives
\begin{equation*}
T^\top T=
\begin{bmatrix}
T_o^\top T_o & T_o^\top T_u\\
T_u^\top T_o & T_u^\top T_u
\end{bmatrix}
=I_n,
\end{equation*}
hence
\begin{equation*}
T_o^\top T_o=I_r,
\qquad
T_u^\top T_u=I_{n-r}.
\end{equation*}
Therefore,
\begin{equation*}
\|T_o\tilde x_o\|^2
=\tilde x_o^\top T_o^\top T_o \tilde x_o
=\|\tilde x_o\|^2,
\end{equation*}
and similarly,
\begin{equation*}
\|T_u\tilde x_u\|^2
=\tilde x_u^\top T_u^\top T_u \tilde x_u
=\|\tilde x_u\|^2.
\end{equation*}
%%
% Taking square roots yields \eqref{e:Norm_Preserve}.
\end{proof}
%%%%%%%%%%%%%%%%%%%%%%%%%%%%%%%%%%%%%%%%%%%%%%

\noindent\textbf{Lemma \ref{t:O_stack}.} (\textit{Stacked-output observability matrix.})\par
\smallskip
Let  $C_1\in\mathbb R^{p_1\times n}$ and $C_2\in\mathbb R^{p_2\times n}$ and define
$C=\begin{bmatrix}C_1^\top & C_2^\top\end{bmatrix}^\top$.
Consider observability matrices $\mathcal O\coloneqq\mathcal O(A,C)$ and $\mathcal O_i \coloneqq \mathcal O(A,C_i), \quad i=1,2$. 
Then there exists a permutation matrix $M \in \mathbb{R}^{(p_1+p_2)n\times n}$ such that 
\begin{equation*}\label{e:O_stack_identity}
\mathcal O
=M
\begin{bmatrix}
\mathcal O_1\\
\mathcal O_2
\end{bmatrix}.
\end{equation*}
Consequently, if $\operatorname{rank}(\mathcal O)=n$ and $r_i \coloneqq\operatorname{rank}(\mathcal O_i)$, then
\begin{equation*}\label{e:r1_r2_ge_n_from_O}
r_1+r_2\ge n.
\end{equation*}

\begin{proof}
By definition,
\[
\mathcal O(A,C)=
\begin{bmatrix}
C\\
CA\\
\vdots\\
CA^{n-1}
\end{bmatrix}
=
\begin{bmatrix}
C_1\\
C_2\\
C_1A\\
C_2A\\
\vdots\\
C_1A^{n-1}\\
C_2A^{n-1}
\end{bmatrix}
=
M\begin{bmatrix}
\mathcal O(A,C_1)\\
\mathcal O(A,C_2)
\end{bmatrix},
\]
which proves \eqref{e:O_stack_identity}. Since multiplication by permutation matrices does not change rank, if $\operatorname{rank}(\mathcal O)=n$, then
\[
n=\operatorname{rank}\!\begin{bmatrix}\mathcal O_1\\ \mathcal O_2\end{bmatrix}
\le \operatorname{rank}(\mathcal O_1)+\operatorname{rank}(\mathcal O_2)=r_1+r_2.
\]
\end{proof}

%%%%%%%%%%%%%%%%%%%%%%%%%%%%%%%%%%%%%%%%%%%%%%%%%%%%%

%%%%%%%%%%%%%%%%%%%%%%%%%%%%%%%%%%%%%%%%%%%%%%%%%%

\noindent\textbf{Lemma \ref{t:feasibility_eta}.} (\textit{Feasibility of the design inequalities.})\par
\smallskip
For each cycle $k$, the inequalities
\eqref{e:eta1_bound} and \eqref{e:eta2_bound} are feasible.

\begin{proof}
We show feasibility of \eqref{e:eta1_bound}; the proof for
\eqref{e:eta2_bound} is analogous.

By Lem. A.2, the prefactor $\kappa_k^{(1)}$
can be chosen as
\[
\kappa_k^{(1)} = M(\eta_k^{(1)}),
\]
where $M$ is a polynomial. Hence, there exist constants $C>0$, $d>0$,
and $\bar{\eta}>0$ such that
\[
M(\eta)\le C\eta^d,
\qquad \forall \eta\ge \bar{\eta}.
\]
Therefore, for all $\eta_k^{(1)}\ge \bar{\eta}$,
\[
\ln\!\left(\frac{\kappa_k^{(1)}}{\delta_1}\right)
\le
\ln\!\left(\frac{C}{\delta_1}\right)+d\ln \eta_k^{(1)}.
\]
Substituting into \eqref{e:eta1_bound}, it is sufficient to require
\[
\eta_k^{(1)}
\ge
\frac{\mu^{(1)}(1-s_k)}{s_k}
+
\frac{1}{s_k\tau_k}\ln\!\left(\frac{C}{\delta_1}\right)
+
\frac{d}{s_k\tau_k}\ln \eta_k^{(1)}.
\]
For each fixed $k$, this is of the form in Lem. A.1 with
\[
c_1=
\frac{\mu^{(1)}(1-s_k)}{s_k}
+
\frac{1}{s_k\tau_k}\ln\!\left(\frac{C}{\delta_1}\right),
\qquad
c_2=\frac{d}{s_k\tau_k}.
\]
Hence, by Lem. A.1, there exists $\eta_k^{(1)}>0$
satisfying the above inequality. Choosing additionally
$\eta_k^{(1)}\ge \bar{\eta}$, we conclude that
\eqref{e:eta1_bound} is feasible.
\end{proof}
\smallskip

%%%%%%%%%%%%%%%%%%%%%%%%%%%%%%%%%%%%%%%%%%%%%%%

\noindent\textbf{Lemma \ref{t:observer-sdp-design}.} 
(\textit{LMI synthesis of observer gain.})\par
\smallskip
Let $(A,C)$ be observable and fix $\eta>0$.
Consider the semidefinite program
\begin{align*}
\min_{P,Y,\theta} \quad & \theta \\
\text{subject to}\quad
& P=P^\top\succ0, \\
& A^\top P+PA-C^\top Y^\top-YC+2\eta P \preceq 0, \\
& I \preceq P \preceq \theta I.
\end{align*}
Let $(P^\star,Y^\star,\theta^\star)$ be an optimal solution and define
\begin{equation*}
L^\star \coloneqq (P^\star)^{-1}Y^\star.
\end{equation*}
Then the estimation error satisfies
\begin{equation*}
\|\tilde{x}(t)\|
\le
\kappa^\star\,e^{-\eta (t-t_0)}\|\tilde{x}(t_0)\|,
\qquad t\ge t_0,
\end{equation*}
with $\kappa^\star=\sqrt{\theta^\star}$.

\begin{proof}
From $I\preceq P\preceq \theta I$ we have $\lambda_{\min}(P)\ge 1$ and
$\lambda_{\max}(P)\le \theta$, hence $\mathrm{cond}(P)\le \theta$ and
$\sqrt{\mathrm{cond}(P)}\le \sqrt{\theta}$.
With $L^\star=(P^\star)^{-1}Y^\star$, the LMI
$A^\top P+PA-C^\top Y^\top-YC+2\eta P\preceq 0$ becomes
\[
(A-L^\star C)^\top P^\star+P^\star(A-L^\star C)\preceq -2\eta P^\star.
\]
Applying the Lyapunov argument in \Cref{t:uniform-euclidean-bound}
yields
$\|\tilde{x}(t)\|\le \sqrt{\mathrm{cond}(P^\star)}e^{-\eta(t-t_0)}\|\tilde{x}(t_0)\|$,
and since $\sqrt{\mathrm{cond}(P^\star)}\le\sqrt{\theta^\star}$ the claim follows
with $\kappa^\star=\sqrt{\theta^\star}$.
\end{proof}

%%%%%%%%%%%%%%%%%%%%%%%%%%%%%%%%%%%%%%%%%%%%%%%%%%%%%%%%%%%%%%%%%
% \subsection{Auxiliary lemmas}
% \begin{lemma}[Polynomial decay bound]\label{t:polynomial-prefactor}
% Let $(A,C)$ be observable, where $A \in \mathbb{R}^{n\times n}$, and let $\tau>0$ be fixed. For each $\eta>0$, choose an observer gain $L_\eta$ such that the matrix $A - L_\eta C$ has characteristic polynomial $(s+\eta)^n$. Then there exists a polynomial $M(\eta)$ such that
% \begin{equation}
% \left\|e^{(A - L_\eta C)\tau}\right\|
% \le
% M(\eta)\,e^{-\eta \tau}, \qquad \forall \eta>0.
% \end{equation}
% In particular,
% \begin{equation}
% M(\eta)\,e^{-\eta \tau} \;\longrightarrow\; 0
% \quad \text{as } \eta \to \infty.
% \end{equation}
% \end{lemma}
% \smallskip

%%%%%%%%%%%%%%%%%%%%%%%%%%%%%%%%%%%%%%%%%%%%%%%%%%%%%%%%%%%%%%%%%%%%%%%

% \begin{lemma}\label{t:eta-log}
% Let $c_1>0$ and $c_2>0$. Then the inequality
% \begin{equation}
% \eta \;\ge\; c_1 + c_2 \ln \eta
% \end{equation}
% admits a solution.
% \end{lemma}

% \begin{proof}
% Define $f(\eta):=\eta - c_2\ln\eta$. Then
% \[
% f'(\eta)=1-\frac{c_2}{\eta},
% \]
% so $f'(\eta)>0$ for all $\eta>c_2$, and hence $f$ is strictly increasing on $(c_2,\infty)$. Moreover, $f(\eta)\to\infty$ as $\eta\to\infty$. Therefore, there exists $\eta>c_2$ such that $f(\eta)\ge c_1$, which proves the claim.
% \end{proof}
% \smallskip

%%%%%%%%%%%%%%%%%%%%%%%%%%%%%%%%%%%%%%%%%%%%%%%%%%%%%%%%%%%%%%%%%%%%%%%%%%%%%%%%%%%%%%%%%%%%%%%%%%%%%%%%%%%%%%%%%%%%%%%%%%%%%%%%%%%%%%%%%%%%%%%%%%%%%%%%
\newpage
\bibliographystyle{IEEEtran}
% \bibliography{scc,sccmaster,scctemp,Myref}
\bibliography{Bib/MyRef}
% \bibliography{MyRef.bib}

\end{document}